\tikzset{snake it/.style={decorate, decoration=snake}}
\newtheorem{theorem}{Theorem}
\newtheorem{definition}[theorem]{Definition}
\newtheorem{lemma}[theorem]{Lemma}
\newtheorem{conjecture}[theorem]{Conjecture}
\newtheorem{claim}[theorem]{Claim}
\newenvironment{proof}{\noindent{\bf Proof:} \hspace*{1mm}}{
    \hspace*{\fill} $\Box$ }
\newcommand{\CC}{\mathbb{C}}
\newcommand{\NN}{\mathbb{N}}
\newcommand{\II}{\mathbb{I}}
\newcommand{\Hh}{\mathcal{H}}
\newcommand{\Oo}{\mathcal{O}}
\newcommand{\Rr}{\mathcal{R}}
\newcommand{\supp}{\mathrm{supp}}
\newcommand{\eps}{\epsilon}
\newcommand{\half}{\frac{1}{2}}
\newcommand{\defeq}{=}
\newcommand{\Enc}{\mathrm{Enc}}
\newcommand{\Dec}{\mathrm{Dec}}
\newcommand{\junk}{\mathsf{junk}}
\newcommand{\unary}{\mathsf{unary}}
\newcommand{\clock}{\mathsf{clock}}
\newcommand{\ancilla}{\mathsf{ancilla}} 
\definecolor{grisframe}{gray}{1.00}
\definecolor{gristitleframe}{gray}{0.85}
\begin{document}

\title{Approximate low-weight check codes and circuit lower bounds for noisy ground states}
\author{Chinmay Nirkhe}
\author{Umesh Vazirani}
\author{Henry Yuen}
\affil{Electrical Engineering and Computer Sciences \\ University of California, Berkeley \\ $\mathtt{\{\href{mailto:nirkhe@cs.berkeley.edu}{nirkhe},\href{mailto:vazirani@cs.berkeley.edu}{vazirani},\href{mailto:hyuen@cs.berkeley.edu}{hyuen}\}@cs.berkeley.edu}$}
\date{}

\maketitle

\begin{abstract}

The No Low-Energy Trivial States (NLTS) conjecture of Freedman and Hastings (Quantum Information and Computation 2014), which asserts the existence of local Hamiltonians whose low-energy states cannot be generated by constant-depth quantum circuits, identifies a fundamental obstacle to resolving the quantum PCP conjecture. Progress towards the NLTS conjecture was made by Eldar and Harrow (Foundations of Computer Science 2017), who proved a closely related theorem called No Low-Error Trivial States (NLETS). In this paper, we give a much simpler proof of the NLETS theorem and use the same technique to establish superpolynomial circuit size lower bounds for noisy ground states of local Hamiltonians (assuming $\mathsf{QCMA} \neq \mathsf{QMA}$), resolving an open question of Eldar and Harrow. We discuss the new light our results cast on the relationship between NLTS and NLETS. 

Finally, our techniques imply the existence of \textit{approximate quantum low-weight check (qLWC) codes} with linear rate, linear distance, and constant weight checks. These codes are similar to quantum LDPC codes except (1) each particle may participate in a large number of checks, and (2) errors only need to be corrected up to fidelity $1 - 1/\mathsf{poly}(n)$. This stands in contrast to the best-known stabilizer LDPC codes due to Freedman, Meyer, and Luo which achieve a distance of $O(\sqrt{n} \log^{1/4} n)$.

The principal technique used in our results is to leverage the Feynman-Kitaev clock construction to approximately embed a subspace of states defined by a circuit as the ground space of a local Hamiltonian. \end{abstract}

\renewcommand{\time}{\mathsf{time}}
\newcommand{\state}{\mathsf{state}}

\section{Introduction}

The quantum PCP conjecture \cite{QuantumNPsurvey, Aharonov:2013:GCQ:2491533.2491549} is a central open question in quantum complexity theory. To understand the statement, it is helpful to review the dictionary translating between classical constraint satisfaction problems (CSPs) and their quantum analogue, the local Hamiltonian problem. A classical CSP on $n$ variables corresponds to a local Hamiltonian $H = H_1 + \cdots + H_m$ acting on $n$ qubits\footnote{For normalization, we assume that the terms of a local Hamiltonian have spectral norm at most $1$.}. A solution to the CSP corresponds to an $n$ qubit quantum state, and the number of violated constraints corresponds to the energy (eigenvalue) 
of that quantum state. The $\NP$-hardness of SAT corresponds to the $\QMA$-hardness of deciding whether the $H$ has minimum eigenvalue at most $a$ or at least $b$ for given $a, b$ such that $b - a = 1/\poly(n)$. The quantum analogue of the PCP theorem, called the qPCP conjecture, asserts that the problem remains $\QMA$-hard even when $b - a \geq c m = c \|H \|$. 

Just as the classical PCP theorem connects coding theory to constraint satisfaction problems, it is natural to expect any resolution of the quantum PCP conjecture to rely on --- and to reveal --- deep connections between the theory of quantum error-correcting codes and ground states (i.e. states of minimum energy) of local Hamiltonians. %
Examples of quantum error-correcting codes realized as the ground spaces of local Hamiltonians already play a central role in our understanding of the physical phenomenon known as topological order~\cite{KITAEV20032}. Moreover, it has been suggested that the qPCP conjecture is closely related to one of the biggest open problems in quantum coding theory: whether quantum low density parity check (qLDPC) codes with linear rate and linear distance exist \cite{freedman2002z2,bravyi2010tradeoffs,tillich2014quantum}.

The difficulty of the qPCP conjecture motivated Freedman and Hastings to formulate a simpler goal called the \emph{No Low-Energy Trivial States (NLTS) Conjecture}~\cite{DBLP:journals/qic/FreedmanH14}. 
One way to put one's finger on the additional difficulty of qPCP (beyond the ``standard'' difficulty of proving a classical PCP theorem) is that solutions of QMA-hard problems are expected to have high description complexity. For example, if $\NP \neq \QMA$, then ground states of local Hamiltonians do not have classically checkable polynomial-size descriptions.
The NLTS conjecture isolates this aspect of high description complexity by asserting the existence of a family of local Hamiltonians $\{H^{(n)}\}_{n=1}^\infty$ where $H^{(n)}$ acts on $n$ particles, such that low-energy states (of energy less than $c \|H \|$) cannot be generated by quantum circuits of constant depth. A much stronger version of the NLTS conjecture is a necessary consequence of the qPCP conjecture: assuming $\QCMA \neq \QMA$,\footnote{For precise definitions of the complexity classes $\QCMA$ and $\QMA$, we refer the reader to Section~\ref{preliminaries}. Roughly speaking, $\QMA$ is the class of problems for which the solution is a quantum state that can be efficiently checked by a quantum computer. $\QCMA$ is the class of problems where the solution is a \emph{classical} string that can be efficiently checked by a quantum computer.} low-energy states cannot be described even by polynomial-size quantum circuits. However, one of the advantages of the NLTS conjecture is that it does not involve complexity classes such as $\QMA$, but rather focuses on the entanglement complexity that is intrinsic to low-energy states of local Hamiltonians.

Like the qPCP conjecture, the NLTS conjecture remains unresolved. In~\cite{DBLP:conf/focs/EldarH17}, Eldar and Harrow proposed a variant of the NLTS called \emph{No Low-Error Trivial States (NLETS)}, which is itself a necessary consequence\footnote{The local Hamiltonian family must be of bounded-degree, meaning no particle participates in more than a constant number of Hamiltonian terms.} of NLTS. The difference was that rather than considering low-energy states of $H$, they considered a notion of ``local corruption error'', what they call \emph{$\eps$-error states}: these are states that differ from the ground state in at most
$\eps n$ qubits. More precisely, $\sigma$ is $\eps$-error for a local Hamiltonian $H$ if there exists a ground state $\rho$ of $H$ and a set $S$ of at most $\eps n$ qudits such that $\Tr_S(\rho) = \Tr_S(\sigma)$. Under this definition they were able to establish a family of Hamiltonians for which any $\eps$-error state requires circuit depth of $\Omega(\log n)$. This was welcomed as very encouraging progress 
towards establishing NLTS, since NLETS could be regarded as a close proxy for NLTS, with a technical change in definition of distance under which to examine the robustness of the ground space.

In this paper, we start by giving a simple argument for the $\Omega(\log n)$ circuit depth lower bound of Eldar and Harrow; our lower bound holds even under a more general error model, which allows any probabilistic mixture of $\eps$-error states (we call these states \emph{noisy ground states}). Moreover, we can use the same techniques to answer their open question of whether one can obtain circuit size lower bounds on low-error states that go beyond logarithmic depth: specifically, we show that there exists a family of local Hamiltonians whose noisy ground states require superpolynomial-size circuits, assuming $\QCMA \neq \QMA$.

One way to view these results is that they provide further progress towards the NLTS conjecture and beyond. %
However, it is instructive to take a step back to consider more closely the basic difference between NLETS and NLTS. This lies in the different notion of approximation: in NLETS, approximation corresponds to local corruptions in $\eps n$ sites, where $n$ is the total number of particles, whereas in NLTS approximation corresponds to energy at most $\eps \| H \|$ (intuitively, at most $\eps$ fraction of the terms of the Hamiltonian are violated). An alternative perspective on our results is that they suggest these two notions of approximation are quite different. This view is reinforced by the fact that our $\Omega(\log n)$-circuit depth lower bounds on noisy ground states holds for a family of 1D Hamiltonians, whereas we know that NLTS and qPCP Hamiltonians cannot live on any constant dimensional lattice~\cite{Aharonov:2013:GCQ:2491533.2491549}. This suggests that in the context of the qPCP and NLTS conjectures, the correct notion of distance is given by the energy or number of violated terms of the Hamiltonian.

On the other hand, the local corruption distance as defined by Eldar and Harrow for their NLETS result is the natural one that arises in quantum error correction: the distance of a code is \emph{defined} by the maximum number of qubits of a codeword that can be erased while maintaining recoverability. 
We give a construction of a family of codes (inspired by the construction used in our noisy ground state lower bound) that we call \emph{quantum low weight check (qLWC) codes}. The family of codes we consider are approximate error-correcting codes in the sense of~\cite{Crepeau:2005:AQE:2154598.2154621,beny2010general}. They are closely 
related to qLDPC codes, with the difference that they are not stabilizer codes and therefore the low weight checks are not Pauli operators. Specifically, we give a family of 
approximate qLWCs with linear distance and linear rate. Constructing qLDPC codes with similar parameters is a central open question in coding theory, with the 
best-known stabilizer LDPC codes due to Freedman, Meyer, and Luo which achieve a distance of $O(\sqrt{n} \log^{1/4} n)$ \cite{freedman2002z2}.

What is common to the above results is the technique. We start with the observation that the complicated part of the Eldar and Harrow proof is constructing a local Hamiltonian 
whose ground states share some of the properties of the cat state $\ket*{\Cat_n} = (\ket{0}^{\otimes n} + \ket{1}^{\otimes n})/\sqrt{2}$.  To do so, they constructed a local Hamiltonian corresponding to a quantum error-correcting code (based on the Tillich-Zemor hypergraph product construction) and showed that its ground states have non-expansion properties similar to those
of the cat state~\cite{DBLP:conf/focs/EldarH17}. Our starting point is the observation that the Tillich-Zemor construction is unnecessary and that one can make the cat state \emph{approximately} a ground state of a local Hamiltonian in the following sense: we construct a Feynman-Kitaev clock Hamiltonian corresponding to the circuit that generates $\ket*{\Cat_n}$ from $\ket{0}^{\otimes n}$.\footnote{A similar construction of a clock Hamiltonian was also considered by Crosson and Bowen in the context of idealized adiabatic algorithms \cite{Crosson2017QuantumGS}.} The ground state of this Hamiltonian is the history state of this computation, and we directly argue that the circuit depth necessary to generate this history state is at least $\Omega(\log n)$. This same argument even allows us to lower bound the circuit depth of \emph{approximate} noisy ground states (i.e. states that are close in trace distance to a noisy ground state).

The Feynman-Kitaev clock Hamiltonian plays a central role in our construction of qLWCs, with history states playing the role of codewords. 
The fact that such a construction yields an error-correcting code flies in the face of classical intuition. After all, it is the brittleness of the 
Cook-Levin tableau \cite{Cook:1971:CTP:800157.805047,Levin:73} (the classical analogue of the history state) that motivates the elaborate classical PCP constructions \cite{Arora:1998:PCP:273865.273901,Arora:1998:PVH:278298.278306,Dinur:2007:PTG:1236457.1236459}. The difference is 
that \emph{time} is in superposition in a quantum history state. We do not yet understand the implications of this observation. For example, is it possible that it might lead to new ways 
of constructing qLDPC codes with super-efficient decoding procedures? There are precedents for such connections between computational phenomena and codes, most notably with the PCP theorem and the construction of locally testable and locally checkable codes. %

Furthermore, while quantum error-correcting codes have typically provided a wealth of examples of interesting local Hamiltonians, our construction of qLWCs also suggest that a fruitful connection exists in the opposite direction: by considering techniques to construct local Hamiltonians (such as the Feynman-Kitaev clock construction), we can construct an interesting example of a quantum error-correcting code. We note that this reverse connection is starting to take hold in other areas of quantum information theory and physics: see \cite{brandao2017quantum,kim2017entanglement}.

\section{Summary of Results}

Before we present our results, we motivate our definition of \emph{noisy ground states}. 

\subsection{Noisy ground states}

The NLETS Theorem and NLTS conjecture describe different ways in which the ground space entanglement is robust. The ground states of NLETS Hamiltonians are robust against local corruptions in $\eps n$ sites, where $n$ is the total number of particles. NLTS Hamiltonians are robust against low-energy excitations in the sense that all states with energy at most $\eps \| H \|$ retain nontrivial circuit complexity. 

In this paper, we study another way that ground space entanglement can be robust. We focus on the concept of \emph{noisy ground state}, which is a generalization of low-error states: an $\eps$-noisy ground state $\sigma$ of a local Hamiltonian $H$ is a probabilistic mixture of $\eps$-error states $\{ \sigma_i \}$.

This notion of noisy ground state is naturally motivated by the following situation: consider a ground state $\rho$ of $H$. On each particle independently apply the following process $\mathcal{M}$: with probability $\eps$, apply a noisy channel $\mathcal{N}$, and with probability $1 - \eps$ apply the identity channel $\mathcal{I}$. The resulting state is
\begin{equation}
\begin{aligned}
	\mathcal{M}(\rho) &= \left ( (1 - \eps) \mathcal{I} + \eps \mathcal{N} \right)^{\otimes n}(\rho) \\
	&= \sum_{S \subseteq [n]} (1 - \eps)^{n - |S|} \eps^{|S|} \mathcal{N}^S(\rho) \\
	&\approx \sum_{S: |S| \leq 2 \eps n} (1 - \eps)^{n - |S|} \eps^{|S|} \mathcal{N}^S(\rho)
\end{aligned}
\end{equation}
where $\mathcal{N}^S$ denotes the tensor product of the map $\mathcal{N}$ acting on the particles indexed by $S$. The last approximate equality follows from the fact that with overwhelmingly large probability, $\mathcal{N}^S$ acts on at most $2 \eps n$ particles. Notice that the expression on the right hand side is (up to normalization) a $2\eps$-noisy ground state, because when $|S| \leq 2 \eps n$, the state $\mathcal{N}^S(\rho)$ is a $2 \eps$-error state.

This justifies the name ``noisy ground state'', as the operation $\mathcal{M}$ is is a reasonable model of noise that occurs in physical processes (and is frequently considered in work on quantum fault-tolerance). Furthermore, we believe that our model arises naturally in the context of noisy adiabatic quantum computation. 

As mentioned before, noisy ground states are a generalization of low-error states but are a special case of low-energy states: since low-error states are themselves low-energy states, a convex combination of them is also low-energy.

\vspace{20pt}

\noindent We prove several results about the robustness of entanglement in noisy ground states.

\subsection{Logarithmic circuit depth lower bound} 

First, we generalize Eldar and Harrow's logarithmic circuit depth lower bound \cite{DBLP:conf/focs/EldarH17} to encompass noisy ground states. Furthermore, we present a family of Hamiltonians that is \emph{one dimensional}; in other words, the particles of the Hamiltonian are arranged on a line and the Hamiltonian terms act on neighboring particles.

We call this the \emph{Logarithmic Noisy Ground States (LNGS) Theorem}\footnote{We pronounce this ``Longs.''}.

\begin{restatable}[Logarithmic lower bound]{theorem}{lngs}
There exists a family of $3$-local Hamiltonians $\{ H^{(n)} \}$ on a line, acting on particles of dimension $3$, such that for all $n \in \NN$, for all $0 \leq \eps < 0.11$, $0 \leq \delta < \frac{7}{16} - 12\eps + \sqrt{8\eps}$, the $\delta$-approximate circuit depth of any $\eps$-noisy ground state $\sigma$ for $H^{(n)}$ is at least $\half \log_2 (n/2)$.
\label{theorem-lngs}
\end{restatable}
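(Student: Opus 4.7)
The plan is to apply the Feynman-Kitaev clock construction to a minimal circuit generating the cat state, and then combine a measurement of the clock register with a light-cone counting argument to lower bound the circuit depth of any approximate noisy ground state.

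First, I would fix a size-$T = O(n)$ circuit $V = V_T \cdots V_1$ preparing $\ket{\mathrm{Cat}_n} = (\ket{0}^{\otimes n} + \ket{1}^{\otimes n})/\sqrt{2}$ from $\ket{0}^{\otimes n}$, e.g.\ a Hadamard on the first qubit followed by a cascade of CNOTs. Using a one-dimensional nearest-neighbor $3$-local clock construction on qutrits (a domain-wall propagation Hamiltonian in the Aharonov--Gottesman--Irani--Kempe flavor, trimmed to the bare $3$ states per site needed to encode data and clock-position jointly) I would assemble $H^{(n)}$ whose unique ground state is the history state
\[
\ket{\eta} = \frac{1}{\sqrt{T+1}} \sum_{t=0}^{T} \ket{t}_{\mathrm{clk}} \otimes V_t \cdots V_1 \ket{0^n}_{\mathrm{data}}.
\]

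Now let $\tau$ be any state within trace distance $\delta$ of an $\eps$-noisy ground state $\sigma = \sum_i p_i \sigma_i$, where each $\sigma_i$ is $\eps$-error for $H^{(n)}$. I would measure the clock register of $\tau$ in the appropriate basis: with probability $\Omega(1)$ the outcome is a time $t^\star$ at which $V_{t^\star} \cdots V_1 \ket{0^n}$ equals $\ket{\mathrm{Cat}_n}$ on the data register. A triangle inequality combining convexity of trace distance with the fact that each $\sigma_i$ agrees with $\ket{\eta}$ on all but $\eps n$ particles yields a subset $U$ of the data qubits of size at least $(1-O(\eps))n$ on which the post-measurement marginal of $\tau$ is $O(\eps + \delta)$-close in trace distance to $\Tr_{[n] \setminus U}(\ket{\mathrm{Cat}_n}\bra{\mathrm{Cat}_n})$. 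In particular, for any pair $i,j \in U$ the residual $ZZ$ correlation $\langle Z_i Z_j\rangle_\tau - \langle Z_i\rangle_\tau \langle Z_j\rangle_\tau$ is at least $1 - O(\eps + \delta)$.

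For the depth lower bound, suppose for contradiction that $\tau$ is the output of a depth-$d$ circuit $\mathcal{C}$ with arbitrary two-qubit gates acting on $\ket{0^N}$. Every qubit of $\mathcal{C}$ has past light cone of size at most $2^d$ and future light cone of size at most $2^d$, so any fixed output qubit $i$ has its past light cone overlapping with those of at most $2^{2d}$ other output qubits. Whenever $2^{2d} < n/2$, a counting argument inside $U$ (using $|U| \geq n/2$) produces a pair $i,j \in U$ with disjoint past light cones in $\mathcal{C}$, so the reduced state $\tau_{ij}$ factorizes as $\tau_i \otimes \tau_j$ and its $ZZ$ correlation vanishes. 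Combined with the $1 - O(\eps+\delta)$ lower bound from the previous step, this is a contradiction once $\eps < 1/48$ and $\delta < \half - 6\eps$, forcing $d \geq \half\log(n/2)$. The main obstacle is precisely this constant-tracking: carrying the perfect cat correlation through the conditioning on the clock outcome, the convex mixture over $\eps$-error states, and the $\delta$-trace-distance slack, while preserving a strictly positive residual correlation that defeats the zero forced by the light-cone argument. The exact numerical thresholds $1/48$ and $\tfrac{1}{8} - 6\eps$ emerge from this accounting; the 1D qutrit structure of $H^{(n)}$ is an off-the-shelf ingredient and does not enter the circuit-depth argument, since we quantify over arbitrary (non-geometric) preparation circuits.
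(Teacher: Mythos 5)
Your overall scaffolding — Feynman--Kitaev clock for a cat-state circuit, fusing clock and data into 1D qutrits, and then a light-cone counting argument à la Claim~\ref{clm:lightcone} — matches the paper's construction. But the middle of your argument contains two genuine gaps.

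\textbf{The $\Omega(1)$-probability claim is false.} With the unpadded cat circuit, the data register holds $V_{t}\cdots V_1\ket{0^n} = \ket{\mathrm{Cat}_t}\otimes\ket{0}^{\otimes(n-t)}$ at time $t$; it equals $\ket{\mathrm{Cat}_n}$ only when $t = T$, which the uniform clock measurement outputs with probability $1/(T+1) = O(1/n)$, not $\Omega(1)$. What is true with constant probability is that you land on $t^\star \ge n/2$ and hence a \emph{linear-length} cat state on the first $t^\star$ sites; your later counting argument would need to be restated around this, not around the full cat state.

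\textbf{Conditioning on the clock breaks the light-cone factorization.} This is the more serious problem. After the qutrit fusion there is no separate compact ``clock register'': the time value $t$ is encoded globally as the boundary between the $\{\ket{0},\ket{1}\}$-valued qutrits and the $\ket{2}$-valued tail. The projector $\Pi_{t^\star}$ you would measure is supported on essentially all $n$ sites, so its light cone in the depth-$d$ preparation circuit covers everything, and Proposition~\ref{prop-lightcone} gives you no factorization of the post-measurement state. Worse, you apply the ``disjoint light cones $\Rightarrow$ product'' step to the conditioned state $\tau|_{\mathrm{clk}=t^\star}$, but that state is not itself the output of the depth-$d$ circuit $\mathcal{C}$ — it is obtained from $\mathcal{C}$'s output by a (potentially deep) measurement-and-postselection, so the light-cone structure of $\mathcal{C}$ does not transfer to it. The paper sidesteps this entirely: it never measures the clock. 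Instead it cooks up \emph{single-qutrit} observables ($A_i$ supported on $\{\ket{0},\ket{2}\}$, and a complementary $B_j$ that is annihilated by $\ket{2}$) whose two-point function $\langle A_i\otimes B_j\rangle_\Psi$ vanishes identically on the history state (the cat anticorrelation kills it within the wave front, and the $\ket{2}$-annihilation kills it ahead of the wave front), while the one-point marginals $\langle A_i\rangle,\langle B_j\rangle$ stay $\Omega(1)$. The $\eps$-noise and $\delta$-slack are then absorbed directly into these three scalar quantities, and Proposition~\ref{prop-lightcone} applies cleanly because $A_i$ and $B_j$ are genuinely local. If you want a $Z$-style version, you can avoid the measurement: the \emph{unconditioned} expectation $\langle Z_iZ_j\rangle_\Psi - \langle Z_i\rangle_\Psi\langle Z_j\rangle_\Psi$ is already $\Omega(1)$ for suitably chosen $i<j$ once you fix a sensible extension of $Z$ to the $\ket{2}$ level, because the history state averages over time slices and the cat correlation survives the average. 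Your proposal would need this unconditioned formulation to be salvageable; as written, the clock-conditioning step is the flaw that defeats the argument.

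Finally a small bookkeeping remark: you write $\delta < \tfrac12 - 6\eps$ at the end; the theorem requires $\delta < \tfrac18 - 6\eps$, so the constants do not come out of your accounting as sketched.
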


Here, the $\delta$-approximate circuit depth of $\rho$ means the circuit depth needed to produce a state that is $\delta$-close to $\rho$ in trace distance.

Our proof of Theorem~\ref{theorem-lngs} is simple and self-contained. As a consequence of our simpler local Hamiltonian construction, we obtain improved parameters over those in \cite{DBLP:conf/focs/EldarH17}. Furthermore, as we will discuss below in Section~\ref{sec:implications}, the fact that our LNGS Hamiltonian is one dimensional gives a strong separation between NLETS/LNGS and NLTS Hamiltonians.

\subsubsection{Superpolynomial circuit size lower bound} 
A question that was left open by~\cite{DBLP:conf/focs/EldarH17} is whether one can obtain circuit lower bounds on low-error states that are better than logarithmic -- say polynomial or even exponential. We show that there exists a family of local Hamiltonians whose noisy ground states require superpolynomial\footnote{Here, ``superpolynomial'' refers to functions $f(n)$ that grow faster than any polynomial in $n$.} size circuits, assuming $\QCMA \neq \QMA$. Since low-error states are noisy ground states, this provides an answer to Eldar and Harrow's open question. 

We call this the \emph{Superpolynomial Noisy Ground States (SNGS) Theorem}\footnote{We pronounce this ``Songs''.}.

\begin{restatable}[Superpolynomial Noisy Ground States (SNGS)]{theorem}{superpoly}
If $\QCMA \neq \QMA$, then there exists $q, \eps > 0$ and a family of $7$-local Hamiltonians $\{ H^{(n)} \}$ acting on dimension-$q$ qudits such that for all $0 \leq \delta < 1/5$, the $\delta$-approximate circuit complexity of any family $\{ \sigma_n \}$ of $\eps$-noisy ground states for $\{ H^{(n)} \}$ grows faster than any polynomial in $n$.
\label{theorem-superpoly}
\end{restatable}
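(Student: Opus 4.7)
The plan is to leverage the Feynman--Kitaev clock construction applied to a $\QMA$ verifier for a language that witnesses the separation $\QCMA\neq\QMA$, mirroring the LNGS proof in which the ground state of a clock Hamiltonian for the cat-state generator serves as a hard-to-produce history state. I first fix a language $L\in\QMA\setminus\QCMA$ with verifier $V$ using witness length and running time polynomial in the input length. The key observation is that if, for every YES instance $x$ of $L$, there were an accepting quantum witness preparable by a polynomial-size circuit, then the classical description of that circuit would serve as a valid $\QCMA$ witness; hence, because $L\notin\QCMA$, for infinitely many input lengths there must be a ``hard'' instance $x\in L$ for which no polynomial-size circuit $C$ makes $V(x,C\ket{0})$ accept with noticeable probability.

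For each such hard $x$, let $H^{(n)}$ be the Feynman--Kitaev clock Hamiltonian, with unary clock on constant-dimensional qudits, for the verifier circuit $V_x := V(x,\cdot)$, padded so that $H^{(n)}$ acts on $n$ qudits total; this is $7$-local by the standard construction and, when augmented with a final-time acceptance penalty, has as its ground space the span of history states
\[
\ket{\eta_\psi} \;=\; \tfrac{1}{\sqrt{T+1}}\sum_{t=0}^{T}\ket{t}\otimes V_t\cdots V_1\ket{x,\psi,\mathbf{0}},
\]
indexed by accepting witnesses $\ket{\psi}$ of $V_x$. For sizes $n$ at which no hard instance lives, $H^{(n)}$ is defined to be trivial.

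Now suppose, for contradiction, that some $\eps$-noisy ground state $\sigma$ of $H^{(n)}$ is approximable to trace distance $\delta<1/5$ by a polynomial-size circuit $C$. Since $\sigma$ is by definition a convex combination of $\eps$-error states of history states $\ket{\eta_\psi}$ with $\ket{\psi}$ accepting, I run the following polynomial-size extraction: apply $C$ to $\ket{0}$, measure the clock register, post-select on the outcome $t=T$ (which succeeds with probability $\Theta(1/T)$ on the exact history state), and read off the verifier's output wire inside the state register. On the exact history state this accepts with overwhelming probability because $V_T\cdots V_1\ket{x,\psi,\mathbf{0}}$ is an accepting final state for the witness $\ket{\psi}$; a robustness analysis shows that this $\Omega(1)$ acceptance advantage survives both the $\delta$-approximation and the $\eps$-noise. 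Thus the entire pipeline is a polynomial-size quantum circuit producing an accepting witness for $V(x,\cdot)$, contradicting the hardness of $x$, and hence the assumption $L\notin\QCMA$.

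The chief technical obstacle is robustness: the $\eps n$-site erasures in the definition of an $\eps$-noisy ground state are adversarial and could in principle be concentrated on the clock register or on the verifier's output wire, destroying the extraction. The fix is the same redundancy mechanism that underlies the LNGS Hamiltonian and the qLWC codes described in the paper: replicate the clock register and the verifier-output bit across $\Omega(n)$ qudits, with local equality-enforcing penalty terms included in $H^{(n)}$, and decode by majority at readout time. Then any adversarial $\eps n$-site erasure, with $\eps$ below a sufficiently small absolute constant, leaves a strict majority of each replicated register intact, and a straightforward averaging argument propagates this intact majority through the post-selection to deliver the required $\Omega(1)$ correctness guarantee. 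This is also the step that pins down the constant $\eps$ referenced in the theorem statement.
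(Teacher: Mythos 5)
Your high-level strategy matches the paper's: apply the Feynman--Kitaev clock construction to a $\QMA$ verifier, and argue that a small circuit generating a noisy ground state yields a classical $\QCMA$ witness. However, the extraction step you describe has two genuine flaws, and the ``replication'' fix does not address either of them; the paper's actual construction differs in exactly the places where your argument breaks.

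\textbf{Post-selection does not survive constant $\delta$.} You propose to measure the clock register, post-select on $t=T$, and read the output wire. The post-selection succeeds with probability $\Theta(1/T)=1/\poly(n)$. But the circuit only approximates the noisy ground state to trace distance $\delta$, which may be any constant up to $1/5$. Trace distance $\delta$ controls the \emph{additive} difference in the probability of any event, so $\Pr_{\sigma'}[t=T]$ can differ from $\Pr_\sigma[t=T]\approx 1/(T{+}1)$ by as much as $\delta$ and in particular can be driven to zero; the conditional state after post-selection can then be arbitrary. No ``robustness analysis'' can rescue this because $1/\poly(n)\ll\delta$. The paper sidesteps this entirely: the transformed circuit $C'$ applies the encoder $V$, then a long \emph{waiting period} of identity gates (at least half the total circuit time), then $V^\dagger$, then $C$. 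After tracing out the $\time$ register, a constant fraction of the mass sits in the waiting period, so the useful state occurs with probability $\Omega(1)$ --- no post-selection is needed, and a constant $\delta$ can be absorbed.

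\textbf{Reading the output bit is unsound.} Your implied $\QCMA$ verifier accepts iff, after post-selection, the output wire reads $\ket 1$ (by majority over replicas). In the \emph{no} case a cheating prover simply describes a circuit that prepares $\ket{t{=}T}\otimes\ket{1,\ldots,1}$; majority readout then accepts with certainty. Replicating the output bit gives the prover more ways to cheat, not fewer. The paper's $\QCMA$ verifier instead decodes the \emph{witness} from the state register (via $V^\dagger$), discards the rest, and \emph{re-runs} the original verifier $C$ on the decoded witness; soundness then reduces directly to the soundness of $C$. This is also why the paper protects the \emph{witness} with a linear-distance CSS code rather than merely replicating a single output bit and the clock: the $\eps n$-site erasure can fall anywhere on the state register, and what must survive is the encoded witness that will be fed back into $C$, not just one bit.

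To make the proof go through you would have to (i) restructure the verifier circuit with an encode--wait--decode--verify pattern so the decodable state occupies $\Omega(1)$ of the time register's mass, (ii) have the $\QCMA$ verifier decode the witness and re-run $C$ rather than read the output wire, and (iii) use a good quantum error-correcting code on the full witness to absorb the $\eps n$-site errors. At that point you have essentially reproduced the paper's argument.
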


We call such a family $\{H^{(n)} \}$ \emph{SNGS Hamiltonians}. The following is a proof sketch. Let $L = (L_{yes}, L_{no})$ be the $\QMA$-complete language consisting of descriptions of polynomial-size verifier circuits acting on a witness state and ancilla qubits. We convert each circuit $C \in L$, into a circuit $C'$ where $C'$  applies in order: (a) a unitary $V$ to encode the state in an error-correcting code, (b) a collection of identity gates, (c) the unitary $V^\dagger$ to decode the state, and (d) the gate circuit $C$. The construction maintains that the circuits $C'$ and $C$ are equivalent. We then generate the Feynman-Kitaev clock Hamiltonian for $C'$. Let $H_C$ be this Hamiltonian. The family of SNGS Hamiltonians is precisely $\{H_C : C \in L_{yes}\}$.

In order to prove that all noisy ground states of this Hamiltonian must have superpolynomial circuit size, we show that if there was a noisy ground state with a polynomial-size generating circuit, then the description of the generating circuit would suffice as a \emph{classical} witness for the original $\QMA$-complete problem. In the {\em yes} case, the construction of $C'$ from $C$ enforces that tracing out the $\sf{time}$ register of the noisy ground state will yield a state close to a convex combination of $\{\Enc(\ket{\xi_i,0})\}$ where $\Enc(\cdot)$ is the encoding function for the error-correcting code and $\{\ket{\xi_i}\}$, a collection of accepting witness. Therefore, given the description of the generating circuit for the noisy ground state, we can generate the noisy ground state and decode the original witness state. It suffices then to check the witness by running the original circuit $C$. The {\em no} case follows easily from the definition of $L_{no}$. This proves that $L \in \QCMA$, proving $\QCMA = \QMA$, contradicting the original assumption. 

\subsubsection{Semi-explicit SNGS Hamiltonians via oracle separations}

It is an open question in quantum complexity theory of whether $\QCMA$ is equal to $\QMA$. Aaronson and Kuperberg gave the first complexity-theoretic evidence that they are different by constructing a \emph{quantum} oracle $\Oo$ such that $\QCMA^\Oo \subsetneq \QMA^\Oo$~\cite{aaronson2007quantum}. Fefferman and Kimmel later showed that one can obtain the same oracle separation with \emph{in-place} oracles $\Oo$, which are permutation matrices in the standard basis~\cite{DBLP:journals/corr/FeffermanK15}. The separations of~\cite{aaronson2007quantum,DBLP:journals/corr/FeffermanK15} hold as long as the locality of the oracles $\Oo$ is $\omega(\log n)$ (i.e. superlogarithmic in the problem size).

We show that any oracle separation between $\QCMA$ and $\QMA$ can be leveraged to obtain a semi-explicit family of SNGS Hamiltonians:

\begin{restatable}{theorem}{superpolyoracle}
\label{theorem-superpoly-oracle}
There exists $q, \eps > 0$, a function $k(n) = O(\log^{1 + \alpha} n)$ for arbitrarily small $\alpha > 0$ and a family of $k$-local Hamiltonians $\{ H^{(n)} \}$ acting on dimension-$q$ qudits such that the following holds: 
The circuit complexity of any family $\{ \sigma_n \}$ of $\eps$-noisy ground states for $\{ H^{(n)} \}$ grows faster than any polynomial in $n$. Furthermore, there is exactly one term in $H^{(n)}$ that is $k(n)$-local; all other terms are $7$-local. 
\end{restatable}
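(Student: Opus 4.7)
The plan is to relativize the proof of Theorem~\ref{theorem-superpoly}, substituting the unconditional hypothesis $\QCMA \neq \QMA$ with one of the known oracle separations $\QCMA^\Oo \subsetneq \QMA^\Oo$ due to Aaronson--Kuperberg~\cite{aaronson2007quantum} or Fefferman--Kimmel~\cite{DBLP:journals/corr/FeffermanK15}. These separations hold whenever the per-query oracle locality is $\omega(\log n)$, so we may fix the oracle to act on exactly $k(n) = O(\log^{1+\alpha} n)$ qubits for any desired $\alpha > 0$.

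The construction of $H^{(n)}$ is then identical to that of Theorem~\ref{theorem-superpoly}, but applied to a $\QMA^\Oo$-complete language whose canonical verifier circuit $C$ makes only a \emph{single} oracle query. The Aaronson--Kuperberg marked-subspace problem has this form naturally: the $\QMA^\Oo$ protocol essentially applies the oracle once to the witness and measures the result. The Fefferman--Kimmel in-place variant can likewise be cast as a single-query protocol. Augment $C$ with the preamble $V \circ I^{\otimes T} \circ V^\dagger$ that encodes, idles, and decodes the witness in a good error-correcting code, and form the Feynman--Kitaev clock Hamiltonian of the resulting circuit. Every standard 2-qubit gate contributes a 7-local Hamiltonian term, exactly as in Theorem~\ref{theorem-superpoly}. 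The sole oracle gate contributes one Hamiltonian term of locality $k(n) + O(1)$, which is absorbed into the $O(\log^{1+\alpha} n)$ bound. The resulting Hamiltonian has exactly one $k(n)$-local term, with all other terms 7-local, as required.

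The circuit lower bound on noisy ground states then proceeds precisely as in the unconditional SNGS proof, only relativized. If some family $\{\sigma_n\}$ of $\eps$-noisy ground states admitted generating circuits of polynomial size, then the classical description of such a circuit would serve as a $\QCMA^\Oo$ witness: the verifier, with oracle access to $\Oo$, generates $\sigma_n$ from its classical description, traces out the $\mathsf{time}$ register to recover (approximately) a convex combination of encoded $\QMA^\Oo$ witnesses, applies $V^\dagger$ to decode, and then runs the original $\QMA^\Oo$ verifier. Completeness and soundness in the \emph{yes} and \emph{no} cases are inherited from the underlying $\QMA^\Oo$ protocol in the same way as in the proof of Theorem~\ref{theorem-superpoly}. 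This would yield $\QCMA^\Oo = \QMA^\Oo$, contradicting the chosen separation.

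The main obstacle is of a bookkeeping rather than conceptual nature: one must verify that the cited oracle separations can be instantiated by a $\QMA^\Oo$ protocol making a single query at the prescribed locality, so that the Hamiltonian contains exactly one superconstant-locality term and not polynomially many. For marked-subspace oracles this is essentially immediate, and the Fefferman--Kimmel construction can be massaged into single-query form by bundling any auxiliary queries into the verifier's coherent post-processing. The required width $k(n) = O(\log^{1+\alpha} n)$ then follows by choosing the per-query oracle size to scale as any desired superlogarithmic function of the instance size, under which the $\QCMA$ vs.\ $\QMA$ separation is preserved.
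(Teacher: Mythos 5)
Your proposal is correct and matches the paper's argument essentially step for step: take the Fefferman--Kimmel (or Aaronson--Kuperberg) oracle separation $\QCMA^\Oo \subsetneq \QMA^\Oo$ with a single-query verifier at oracle width $m(n) = O(\log^{1+\alpha} n)$, apply the same $V \circ \II \circ V^\dagger$ circuit transformation and Feynman--Kitaev construction from Theorem~\ref{theorem-superpoly}, observe that only the propagation term for the oracle call has superconstant locality, and conclude by relativizing the $\QCMA$ witness argument. The paper in fact records the single-query property as an explicit item of its cited version of the Fefferman--Kimmel theorem, so the bookkeeping concern you flag is handled the same way there.
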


Unlike Theorem~\ref{theorem-superpoly}, the superpolynomial lower bound on the circuit complexity of noisy ground states does not require any complexity-theoretic assumption! The caveat is that this family is only known to exist via a counting argument; there is exactly \emph{one} term of the Hamiltonian that has $\omega(\log n)$-locality and does not have an explicit description. However, however, all of other the terms of the local Hamiltonians are $7$-local and have explicit descriptions.

The essential idea is to apply the proof of Theorem~\ref{theorem-superpoly} to the $\QMA^\Oo$ verifier that decides a language $L$ which is not in $\QCMA^\Oo$. In both~\cite{aaronson2007quantum,DBLP:journals/corr/FeffermanK15}, this verifier only makes a single call to the oracle $\Oo$. Thus there is one term in the Feynman-Kitaev clock Hamiltonian corresponding to the propagation of that oracle call. Since we do not have an explicit description of a separating oracle $\Oo$, this Hamiltonian term is non-explicit.

\newcommand{\wt}[1]{\widetilde{#1}}

\subsection{Asymptotically good approximate low-weight check codes}

The techniques from the previous sections also give rise to what we call \emph{approximate quantum low-weight check (qLWC) codes}. These are closely related to quantum low-density parity check (qLDPC) codes, which are stabilizer codes where each parity check acts on a bounded number of particles, and each particle participates in a bounded number of parity checks. It is a long-standing open question of whether asymptotically good qLDPC codes exist (i.e. constant locality, constant rate, and constant relative distance). The qLDPC conjecture posits that such codes exist.

We show that if one relaxes the conditions of (a) each particle participating in a small number of constraints, and (b) that we can \emph{exactly} recover from errors, we can obtain locally defined quantum error-correcting codes with such good parameters. First, we define our notion of approximate qLWC codes:

\begin{definition}[Approximate qLWC code]
	A local Hamiltonian $H = H_1 + \cdots + H_m$ acting on $n$ dimension-$q$ qudits is a $[[n,k,d]]_q$ \emph{approximate quantum LWC code} with error $\delta$ and locality $w$ iff each of the terms $H_i$ act on at most $w$ qudits and there exists encoding and decoding maps $\Enc, \Dec$ such that
	\begin{enumerate}
		\item $\bra{\Psi}H\ket{\Psi} = 0$ if and only if $\ketbra{\Psi}{\Psi} = \Enc(\ketbra{\xi}{\xi})$ for some $\ket{\xi} \in (\CC^q)^{\otimes k}$.
		\item For all $\ket{\phi} \in (\CC^q)^{\otimes k} \otimes \mathcal{R}$ where $\mathcal{R}$ is some purifying register, for all completely positive trace preserving maps $\mathcal{E}$ acting on at most $(d-1)/2$ qudits,
	\begin{equation}
		\left \| \Dec \circ \mathcal{E} \circ \Enc(\ketbra{\phi}{\phi}) - \ketbra{\phi}{\phi} \right \|_1 \leq \delta.
	\end{equation}
	Here, the maps $\Enc$, $\mathcal{E}$, and $\Dec$ do not act on register $\mathcal{R}$.
	\end{enumerate}
\end{definition}
The first condition of the above definition enforces that the ground space of the Hamiltonian $H$ of an approximate qLWC code is a $q^k$-dimensional codespace; it is the exactly the image of the encoding map $\Enc$. The second condition corresponds to the approximate error-correcting condition, where we only require that the decoded state is \emph{close} to the original state (i.e., we no longer insist that $\Dec \circ \mathcal{E} \circ \Enc$ is exactly the identity channel $\mathcal{I}$). Although there are few results on approximate quantum error-correcting codes, we do know that relaxing the exact decoding condition yields codes with properties that cannot be achieved using exact codes~\cite{leung1997approximate,Crepeau:2005:AQE:2154598.2154621,beny2010general}.

Our proof of Theorem~\ref{theorem-superpoly} yields a construction of an approximate quantum LWC code with distance $\Omega(n)$, and we give a self-contained presentation of approximate qLWC code constructions in Section~\ref{sec-good-approx-qldpc-codes}. We believe this may be of independent interest.

\begin{restatable}[Good approximate qLWC codes exist]{theorem}{goodapproxqldpc}
	For all error functions $\delta(n)$ there exist a family of $[[n,k,d]]_q$ approximate quantum LWC codes 
	with the following parameters:
	\begin{center}
	\begin{tabular}{ r c c l }
	 Qudit dimension &$q$& $=$ &$O(1)$,\\ 
	 Error & $\delta$ & $=$ & $\delta(n)$,\\  
	 Locality & $w$ & $=$ & $3 + 2r$,\\
	 Blocklength & $n$ & $=$ & $O(rk)$, \\
	 Distance & $d$ & $=$ & $\Omega(n/r)$   
	\end{tabular}
	\end{center}
	where
	\begin{equation}
		r = O \left ( \frac{\log (1 + 4/\delta^2)}{\log n} \right) + 2.
	\end{equation}
	Furthermore, the encoding and decoding maps for these codes are explicit and efficiently computable.	
\end{restatable}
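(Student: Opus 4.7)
The plan is to realise the approximate qLWC code as the Feynman--Kitaev clock Hamiltonian of a carefully chosen circuit, with history states as codewords and decoding performed by tracing out the clock register and inverting the encoding, exactly as in the proof sketch of Theorem~\ref{theorem-superpoly}. Fix an asymptotically good quantum error-correcting code $\mathcal{C}$ encoding $k$ logical qudits into $n_0 = O(k)$ qudits of constant local dimension with distance $d_0 = \Theta(n_0)$, and let $V$ be an encoding unitary for $\mathcal{C}$ built out of gates acting on at most $2r$ qudits. Let $C'$ be the $n_0$-qudit circuit that applies $V$, then $T$ identity gates (with $T$ to be chosen), then $V^\dagger$, and let $H$ be the Feynman--Kitaev clock Hamiltonian for $C'$ with the clock stored in unary on $\Theta(T)$ qubits, together with the usual initialisation terms fixing the non-input data qudits at time $0$. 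Under this unary-clock encoding each propagation term touches three consecutive clock qubits plus the $\leq 2r$ data qudits acted on by the corresponding gate, so $H$ is $(3+2r)$-local and acts on $n = T + n_0 + O(1)$ qudits.

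Define $\Enc(\ketbra{\xi}{\xi}) = \ketbra{\eta_\xi}{\eta_\xi}$ with $\ket{\eta_\xi} = \frac{1}{\sqrt{T+1}}\sum_{t=0}^T \ket{t}_c \otimes \ket{\psi_t(\xi)}_d$, where $\ket{\psi_t(\xi)}$ is the data state after $t$ steps of $C'$ started from $\ket{\xi, 0^{n_0-k}}$. Standard Feynman--Kitaev analysis gives that the image of $\Enc$ is precisely the zero-energy subspace of $H$, verifying condition~(1). For condition~(2), the decoder $\Dec$ first traces out the entire clock register to obtain $\rho_d = \frac{1}{T+1}\sum_t \ketbra{\psi_t(\xi)}{\psi_t(\xi)}$, then applies $V^\dagger$ to the data and returns the first $k$ qudits. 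Because $\ket{\psi_t(\xi)} = V\ket{\xi, 0^{n_0-k}}$ for every $t$ in the long identity window (all but $\mathrm{depth}(V)$ of the $T+1$ time steps), the ideal $\rho_d$ lies within trace distance $O(\mathrm{depth}(V)/T)$ of the pure encoded state $V\ketbra{\xi, 0}{\xi, 0}V^\dagger$. For an adversarial CPTP map $\mathcal{E}$ on at most $(d-1)/2$ qudits, one then decomposes $\mathcal{E}$ into its clock- and data-side action: the portion acting purely on clock qubits is annihilated by the trace-out step (since $\mathrm{Tr}_c \circ \mathcal{E}_c = \mathrm{Tr}_c$), while the portion touching data affects at most $(d-1)/2$ data qudits, which is correctable by $\mathcal{C}$ once we set $d = \Theta(d_0)$.

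Combining the two error contributions and choosing $T = \Theta(rk)$ with $V$ implemented via $2r$-local parallelism yields blocklength $n = O(rk)$, distance $d = \Omega(n/r)$, constant local dimension, and decoding error $\delta$ satisfying the claimed trade-off $r = O\bigl(\log(1+4/\delta^2)/\log n\bigr) + 2$; explicitness of $\Enc$ and $\Dec$ follows because $V$ is a fixed efficient circuit and $\Dec$ is its inverse after partial trace and projection. The main technical obstacle is rigorously handling adversarial CPTP maps that couple the clock and data registers: an entangling operation such as a partial clock--data SWAP can in principle smuggle corrupted clock information into the data register and evade the naive data-side accounting. Formalising this requires a Stinespring dilation argument together with a continuity bound showing that after the trace-out the induced noise on the data register still affects at most $(d-1)/2$ qudits in a form $\mathcal{C}$ can correct. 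Once this reduction is in place, the remainder is a routine composition with the approximate error-correcting properties of $\mathcal{C}$ and the $O(\mathrm{depth}(V)/T)$ residual error from the identity-window approximation.
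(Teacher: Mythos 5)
Your proposal is in the right spirit---the paper's qLWC code is indeed the Feynman--Kitaev clock Hamiltonian of a circuit that first encodes into a good CSS code $Q$ and then idles, with history states as codewords and decoding by tracing out the clock and inverting $V$. But there is a genuine gap in the parameter-trading that prevents the proposal from reaching the stated blocklength and error simultaneously.

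The problem is the choice of a \emph{unary} clock on $\Theta(T)$ qubits. The error $\delta$ is controlled by the fraction of the history lying outside the idle window, roughly $T_V/(T_V+K)$, where $T_V$ is the number of Feynman--Kitaev steps spent applying $V$ and $K$ the number of idle steps; the paper needs $T_V/(T_V+K) \le \delta^2/4$. Since each Feynman--Kitaev step corresponds to a single gate (not a layer), and any encoding circuit must touch all $n_0 = O(k)$ data qudits, even with $2r$-local gates you have $T_V = \Omega(k/r)$. Thus the total circuit length $T = T_V + K$ must be $\Omega\bigl(k/(r\delta^2)\bigr)$, and with a unary clock the blocklength is $n = T + n_0 = \Omega\bigl(k/(r\delta^2)\bigr)$. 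This is far larger than $O(rk)$ whenever $\delta = o(1)$. Conversely, if you force $T = \Theta(rk)$ as proposed, then $\delta^2 \gtrsim T_V/T \gtrsim 1/r^2$, so $\delta$ cannot be made polynomially small without taking $r$ (and hence the locality $3+2r$) polynomially large. The ``$2r$-local parallelism'' remark does not repair this, since the propagation Hamiltonian is indexed by gate count rather than depth. What actually makes the parameters close in the paper is the $r$-dimensional clock, which stores the time $t \in \{0,\ldots,T_C\}$ with $T_C = O\bigl((1+4/\delta^2)k^2\bigr)$ in $r$ counters of length $\lceil T_C^{1/r}\rceil$ each. This compresses the clock register from $T_C$ qubits to $r\lceil T_C^{1/r}\rceil = O(rk)$ qudits, at the price of propagation terms touching $2r+1$ clock qudits---this is where the $2r$ in the locality $3+2r$ actually comes from, not from wide gates of $V$ (the gates of $V$ remain $2$-local in the paper).

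Two smaller points. First, your circuit $C'$ also applies $V^\dagger$ at the end; that step is imported from the SNGS proof (where one must subsequently run the verifier) but is unnecessary here---the code construction just runs $V$ and idles, so the terminal segment of the history stays in the codespace. Second, the clock--data entanglement worry you flag is handled cleanly without a Stinespring/continuity argument: one replaces $\ket{\Psi}$ by the sub-normalized state $\ket*{\wt{\Psi}} = \ket{\tau} \otimes (\text{codeword superposition})$ supported on the idle window, which is a \emph{product} between clock and data. Then $\Dec \circ \mathcal{E}(\ketbra*{\wt\Psi}{\wt\Psi}) = \ketbra{\phi}{\phi}$ exactly: after tracing out the clock, any weight-$\le (d-1)/2$ CPTP error---even one that entangles clock and data qudits---induces a CPTP map on at most $(d-1)/2$ data qudits, which $Q$ corrects. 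The whole $\delta$ error budget is $\|\ketbra{\Psi}{\Psi} - \ketbra*{\wt\Psi}{\wt\Psi}\|_1 \le \delta$ combined with the data-processing inequality, so no further reduction is needed.
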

Observe that when $\delta(n) = 1/\poly(n)$, the parameter $r = O(1)$.

By comparison, the best-known qLDPC codes (of the stabilizer variety) with constant locality have distance bounded by $O(\sqrt{n} \log^{1/4} n)$~\cite{freedman2002z2}. Hastings constructs a qLDPC stabilizer code with constant locality that has distance $n^{1 - \eps}$ for any $\eps > 0$, assuming a conjecture in high dimensional geometry~\cite{Hastings17,hastings2016weight}. Bacon, et al. were able to construct sparse subsystem codes (a generalization of stabilizer codes) with constant locality and distance $n^{1 - o(1)}$~\cite{bacon2017sparse}. We note that, interestingly, the codes of~\cite{bacon2017sparse} are constructed from fault-tolerant quantum circuits that implement a stabilizer code --- this is similar to the way we construct our approximate qLWC codes!

\subsection{Implications for NLTS, quantum PCP and quantum LDPC}
\label{sec:implications}

Our investigation into noisy ground states and approximate low-weight check codes is motivated by a number of important open questions in quantum information theory: NLTS, quantum PCP, and quantum LDPC. We believe that our results help clarify the status of these open problems, and the relationships between them.

\paragraph*{A separation between LNGS/SNGS and NLTS Hamiltonians.} First, our logarithmic circuit-depth lower bound for noisy ground states (Theorem~\ref{theorem-lngs}) gives a strong separation between the notions of entanglement robustness in NLETS and NLTS: we showed that a one-dimensional local Hamiltonian is NLETS. However, it is easy to see that one-dimensional Hamiltonians (or any Hamiltonian on a constant-dimensional lattice) cannot be NLTS. To see this, consider taking a $n$-particle ground state $\ket{\Psi}$ of a 1D Hamiltonian $H$; divide up the $n$ particles into contiguous chunks of length $L$. Let $\sigma = \rho_1 \otimes \rho_2 \otimes \cdots \otimes \rho_{n/L}$ where $\rho_i$ is the reduced density matrix of $\ket{\Psi}$ on the $i$'th chunk. This state $\sigma$ violates $O(n/L)$ terms of the Hamiltonian (since $H$ is one-dimensional). Therefore, it is a $\eps$-energy state of $H$ for $L = \Theta(1/\eps)$. On the other hand, $\sigma$ is a tensor product state that can be generated by $2^{O(1/\eps)}$-depth circuits, which is constant for constant $\eps$. This indicates that the form of entanglement robustness as expressed in NLETS and in our LNGS/SNGS Hamiltonian constructions is much weaker than the entanglement robustness required by the NLTS conjecture and quantum PCP, where one has to look for Hamiltonians on high dimensional geometries.

\paragraph*{Quantum LDPC codes and the Quantum PCP conjecture.} Resolving the qPCP conjecture would likely involve a transformation from $H$ to $H'$ that (at the very least) has the property that exact ground states of $H$ (or closeby states in trace distance) can be recovered from \emph{low-energy} states of $H'$. It has been suggested that such a transformation would involve some kind of qLDPC code~\cite{DBLP:journals/qic/FreedmanH14,doi:10.1137/140975498,Hastings17,DBLP:conf/focs/EldarH17}. In fact, it is believed that a special kind of qLDPC code, called a \emph{quantum locally testable code (qLTC)}, is necessary~\cite{doi:10.1137/140975498}. However, the existence of qLTCs (or even qLDPC codes) with constant relative distance is a major open problem. 

We believe our results on approximate quantum LWC codes present two take-home messages for the qPCP and qLDPC conjectures. First, it is important that a qPCP (or a qLTC) Hamiltonian be local, but it is \emph{not} necessary that the Hamiltonian be bounded degree (meaning that each particle only participates in a few terms). The bounded degree condition is useful in the original context for qLDPCs, where an important motivation is to find fast decoding algorithms. In the context of qPCP/qLTC, however, decoding efficiency is not an immediate concern; thus resolving the qPCP conjecture \emph{need not} resolve the qLDPC conjecture.

Second, we believe this gives evidence that considering codes other than stabilizer codes --- such as approximate codes or subsystem codes --- may be useful in the quest for both qPCP and qLDPC. Most work on qLDPC codes has focused on constructing CSS and stabilizer codes, but it may be fruitful to branch out beyond the CSS/stabilizer setting for the purposes of understanding the possibilities (or limits) of qPCP/qLDPC. For example, our qLWC codes are unconventional in a few ways: they are defined by non-commuting Hamiltonians, they only admit approximate recovery, and each particle participates in a large number of checks.

\subsection{Open questions}

We list a few open problems.
\begin{enumerate}
\item Are there SNGS Hamiltonians or (approximate) qLWC codes that are geometrically local (with respect to, say, the Euclidean metric)? Our construction of a 1-dimensional NLGS Hamiltonian uses a simplification of a technique of Aharanov et. al. \cite{Aharonov:2007:AQC:1328722.1328726} of converting a quantum circuit into a 2-dimensional local Hamiltonian. This technique works because of the specific structure of the circuit generating the $\ket*{\Cat}$ state. In general, the transformation involves increasing the number of qudits by more than a constant factor. If this factor is $\Theta(n^\alpha)$, then the ground states are resilient to errors of size at most $n^{1-\alpha}$.

\item Is there a family of local Hamiltonians such that any \emph{superposition} (not just convex combination) of low-error states have large circuit complexity? This notion is a generalization of a noisy state; such states have small \emph{quantum Hamming distance} to the ground space. This is an interesting notion in the context of quantum locally testable codes (qLTCs) because low-energy states are equivalent to states with low quantum Hamming distance to the codespace (see~\cite{DBLP:conf/focs/EldarH17} for definitions of quantum Hamming distance and qLTCs). 

\item Are there applications of our qLWC constructions?

\item There has been a number of recent results about approximate quantum error-correcting codes in a variety of areas including many-body physics~\cite{brandao2017quantum}, the AdS/CFT correspondence~\cite{kim2017entanglement}, and quantum resource theories~\cite{hayden2017approximate}. Could approximate error-correcting codes play a role in trying to resolve the qPCP and qLDPC conjectures?

\item Eldar and Harrow showed that quantum locally testable codes of the CSS type are NLTS~\cite{DBLP:conf/focs/EldarH17}. Can this argument be extended to general qLTCs?

\item Is it possible for qLDPC codes (not necessarily stabilizer or exact error-correcting codes) to be defined as the codespace of a geometrically local Hamiltonian? There are a few no-go results that give limitations on codes living on lattices~\cite{bravyi2010tradeoffs,Flammia2017limitsstorageof}, but they apply to special classes of codes such as stabilizer codes or locally-correctible codes. Our qLWC codes, by contrast, are neither. A follow-up work to this one, shows that this is indeed possible modulo polylogarithmic corrections \cite{approxqldpc}. 

\item Could the \emph{combinatorial NLTS conjecture} be easier to prove than the NLTS conjecture? This conjecture posits that there exist a family of local Hamiltonians where states that have non-zero energy penalty on only a small constant fraction of Hamiltonian terms must have non-trivial circuit complexity. 
\end{enumerate}

\subsection*{Outline}

In Section \ref{preliminaries}, we provide definitions and formal statements of the conjectures introduced above. In Section \ref
{simple-proof}, we prove the LNGS theorem using our techniques. In Section \ref{qcmaneqqma} we prove the superpolynomial variant (SNGS) assuming $\QCMA \neq \QMA$. In Section \ref{unconditionalsuperpoly}, we prove the unconditional version holding for $O(\log^{1+\kappa} n)$-local Hamiltonians. In Section \ref{sec-good-approx-qldpc-codes}, we prove that asymptotically good approximate qLWC codes exist.

\section*{Acknowledgments}
We thank Dorit Aharanov, Itai Arad, Adam Bouland, Elizabeth Crosson, Bill Fefferman, Lior Eldar, Zeph Landau, Ashwin Nayak, Nicholas Spooner, and Thomas Vidick for helpful discussions. We also thank  Ali Lavasani, Lisa Yang, and anonymous referees for pointing out errors in an earlier version of this work. This work was supported by ARO Grant W911NF-12-1-0541 and NSF Grant CCF-1410022.

\section{Preliminaries}
\label{preliminaries}

We will assume that the reader is familiar with the basics of quantum computing and quantum information.

\subsection{Quantum Merlin-Arthur}

\begin{definition}[$\QMA$]
A quantum circuit $C$ acting on $n$ qubits is a $\QMA$-verifier circuit iff there exists $m \leq n$ qubits that are designated the $\sf witness$ register and the rest of the qubits form the $\sf ancilla$ register, and it satisfies the promise that either there exists a state $\ket{\xi} \in (\CC^2)^{\otimes m}$ such that
\begin{equation}
 	\Pr(C \text{ accepts } \ket{\xi,0}) \geq 2/3
\end{equation}
or for all states $\ket{\xi}$,
\begin{equation}
 	\Pr(C \text{ accepts } \ket{\xi,0}) \leq 1/3.
\end{equation}
By accept, we mean the event that measuring the first qubit of the state $C \ket{\xi,0}$ in the standard basis yields the $\ket{1}$ state.
\end{definition}

\begin{definition}[$\QCMA$]
A quantum circuit $C$ acting on $n$ qubits is a $\QCMA$-verifier circuit iff there exists $m \leq n$ qubits that are designated the $\sf witness$ register and the rest of the qubits form the $\sf ancilla$ register, and it satisfies the promise that either there exists a witness string $w \in \{0,1\}^m$ such that
\begin{equation}
 	\Pr(C \text{ accepts } \ket{w,0}) \geq 2/3
\end{equation}
or for all strings $w \in \{0,1\}^m$,
\begin{equation}
 	\Pr(C \text{ accepts } \ket{w,0}) \leq 1/3.
\end{equation}
\end{definition}
The constants $2/3$ and $1/3$ are arbitrary; they only need to be separated by a universal constant. 

\subsection{Quantum PCP and related problems}

Here we give formal definitions of the quantum PCP conjectures.

\begin{conjecture}[Quantum PCP Conjecture \cite{QuantumNPsurvey}]
It is $\QMA$-hard to decide whether a given local Hamiltonian $H = H_1 + \cdots + H_m$ (where each $\| H_i\| \leq 1$) has minimum eigenvalue at most $a$ or at least $b$ when $b - a \geq c \|H \|$ for some universal constant $c > 0$.
\end{conjecture}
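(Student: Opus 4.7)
The plan is to attack this as the full quantum analog of Dinur's combinatorial proof of the classical PCP theorem, since algebraic/arithmetization routes (Sum-Check, LFKN) seem to rely fundamentally on the tensor-product structure of classical assignments, which does not survive quantization of entangled witnesses. Concretely, I would start from a $\QMA$-hardness result with promise gap $b-a = 1/\poly(n)$ (e.g., Kitaev's circuit-to-Hamiltonian construction, or the perturbative gadget constructions of Kempe--Kitaev--Regev and Oliveira--Terhal giving $2$-local $\QMA$-hardness), and then attempt to iterate a quantum \emph{gap amplification} step that doubles the relative promise gap $(b-a)/\|H\|$ while only increasing locality by a constant factor, so that after $O(\log n)$ iterations the gap becomes a universal constant $c > 0$.

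The individual iterations would decompose, as in the classical proof, into three moves: (i) a \emph{preprocessing} step that regularizes the interaction graph of $H$ and sparsifies it using a quantum expander on the particles, ensuring every qudit participates in $\Theta(1)$ terms with good spectral expansion; (ii) a \emph{powering} step where one replaces each qudit by the tensor of its $t$-neighborhood in the interaction graph and defines new local terms that check consistency over random length-$t$ walks, so that a small fraction of violated terms forces a constant fraction of violated walk-checks; and (iii) a \emph{composition/alphabet-reduction} step, which inner-composes a small quantum PCP verifier (on constant-size qudits) with each blown-up constraint to bring the locality back down. Each step must preserve the $\QMA$-hardness of the promise problem, and each must have a soundness analysis stated in terms of the minimum eigenvalue of the resulting Hamiltonian rather than fraction of violated classical constraints.

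The main obstacle, and the reason this is still open, is \emph{noncommutativity}: classical walk/expander-based amplification rests on the trivially-true statement that an assignment satisfies a collection of constraints iff it satisfies each one, so violations are additive over edges of the walk. For a quantum state $\ket{\psi}$ and noncommuting projectors $\Pi_i$, $\Pi_j$, however, one can have $\bra{\psi}\Pi_i\ket{\psi}$ and $\bra{\psi}\Pi_j\ket{\psi}$ both small while $\bra{\psi}\Pi_i \Pi_j \Pi_i \ket{\psi}$ is large, and it is exactly this ``cross-term'' contribution that kills a naive soundness analysis of powering. Equivalently, a quantum PCP Hamiltonian needs a notion of \emph{local decoding} for low-energy (not just low-error) states, which by the NLTS implication of qPCP requires overcoming precisely the obstacle that the NLTS conjecture highlights; this paper's separation between LNGS/SNGS (achievable even in 1D) and NLTS (provably impossible in constant-dimensional lattices) shows that the proxy results we currently know are still far from the actual conjecture.

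Consequently, the realistic intermediate milestones I would pursue before attempting the full conjecture are: (a) construct a family of linear-rate, linear-distance quantum \emph{locally testable} codes---not merely qLWC or qLDPC codes---since qLTCs are widely believed to be a prerequisite for any qPCP Hamiltonian, and (b) prove the combinatorial NLTS conjecture (Open Problem~7 of this paper), which would give the first genuine low-\emph{energy} circuit lower bound and thereby break the entanglement-robustness barrier that this work's techniques, by design, cannot cross. Only with these tools in hand would the gap-amplification program above have a realistic chance of closing.
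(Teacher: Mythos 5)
You have not proven anything, and neither does the paper: the statement you were given is the Quantum PCP \emph{Conjecture}, which the paper merely states (citing \cite{QuantumNPsurvey}) as motivation; it supplies no proof, and none is known. Your submission is, by your own framing, a research program rather than a proof, so judged as a proof attempt it has a genuine and decisive gap: the soundness analysis of the powering step. You correctly identify it yourself --- for non-commuting projectors $\Pi_i,\Pi_j$, the fact that $\bra{\psi}\Pi_i\ket{\psi}$ and $\bra{\psi}\Pi_j\ket{\psi}$ are small does not control cross-terms such as $\bra{\psi}\Pi_i\Pi_j\Pi_i\ket{\psi}$, so the classical argument that ``few violated edges implies a constant fraction of violated walk-checks'' has no known quantum analogue, and without it the iteration never amplifies the relative gap $(b-a)/\|H\|$ while preserving $\QMA$-hardness. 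The composition/alphabet-reduction step is equally unsupported: it presupposes a small quantum PCP verifier (or a quantum locally testable object) whose existence is itself open, so the recursion has no base.

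As a survey of the landscape your write-up is accurate and consistent with this paper's perspective: the distinction you draw between low-\emph{error} robustness (NLETS/LNGS/SNGS, achievable even for 1D Hamiltonians here) and low-\emph{energy} robustness (NLTS, impossible on constant-dimensional lattices) is exactly the separation the paper emphasizes in Section~\ref{sec:implications}, and your proposed milestones (qLTCs with good parameters, the combinatorial NLTS conjecture) match the paper's open questions. But identifying the obstruction and proposing milestones is not a proof; the key lemma --- a quantum gap-amplification theorem with soundness stated in terms of the minimum eigenvalue --- is precisely what is missing, and nothing in your proposal or in this paper supplies it.
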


\begin{conjecture}[The NLTS Conjecture \cite{DBLP:journals/qic/FreedmanH14}]
There exists a universal constant $\eps > 0$ and an explicit family of local Hamiltonians $\{H^{(n)}\}_{n=1}^\infty$ where $H^{(n)}$ acts on $n$ particles and consists of $m_n$ local terms, such that any family of states $\{\ket{\psi_n} \}$ satisfying $\expval{H^{(n)}}{\psi_n} \leq \eps \| H^{(n)} \| + \lambda_{\min}(H^{(n)})$ requires circuit depth that grows faster than any constant.
\end{conjecture}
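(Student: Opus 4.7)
The plan is to combine the Feynman–Kitaev clock technique of this paper with stronger coding-theoretic inputs to push the NLETS-style argument all the way to NLTS. As the paper already notes, a one-dimensional (or any constant-dimensional-lattice) Hamiltonian cannot be NLTS, so the construction must live on a high-dimensional geometry. My candidate family $\{H^{(n)}\}$ would be the code Hamiltonian of a hypothetical family of quantum LDPC codes with constant locality, linear rate, linear distance, and additionally some form of \emph{local testability}: every state $\ket{\psi}$ with $\langle \psi | H^{(n)} | \psi \rangle \le \eps \|H^{(n)}\|$ is close in trace distance to an exact codeword, where the closeness degrades gracefully with the energy.

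Given such a code, I would proceed in two steps. First, a \emph{rounding step}: the qLTC-style soundness reduces the problem of lower-bounding the circuit depth of arbitrary low-energy states to the much more tractable problem of lower-bounding the circuit depth of exact codewords, up to an additive approximation loss that can be absorbed into the definition of $\delta$-approximate circuit depth from Section \ref{simple-proof}. Second, a \emph{codeword step}: any codeword of a code with linear distance must exhibit long-range entanglement, since local reduced density matrices carry no information about the encoded state; combined with the standard light-cone argument — that a depth-$d$ circuit applied to a product state produces reduced density matrices on a region $R$ determined by the $O(2^d |R|)$-sized backwards light cone of $R$ — this forces the circuit depth to grow with $n$. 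Chaining the two steps yields the NLTS conclusion with a bound of the form $\omega(1)$ (in fact $\Omega(\log n)$) on circuit depth.

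The dominant obstacle is clearly the first step: asymptotically good quantum locally testable codes are not known to exist, and even constructing asymptotically good quantum LDPC codes (without local testability) is a longstanding open problem, with the current state-of-the-art distance scaling only as $O(\sqrt{n \log n})$ \cite{freedman2002z2}. Our approximate qLWC codes sidestep the LDPC barrier by allowing each particle to participate in many checks, but they are \emph{not} locally testable in the relevant sense, so they do not immediately plug into the rounding step. A natural intermediate milestone is the combinatorial NLTS conjecture listed in the open problems above, where the energy condition is replaced by the weaker requirement that only a small constant fraction of Hamiltonian terms have nonzero energy. For this weaker goal one might hope to bypass qLTC soundness entirely and instead adapt the Eldar–Harrow non-expansion argument to Hamiltonians built from high-dimensional expanders, whose cohomological structure forces patterns of violated checks to propagate globally, offering a plausible route to ruling out constant-depth states without requiring a full rounding lemma.
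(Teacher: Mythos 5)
There is a genuine gap, and it begins with the status of the statement itself: the NLTS Conjecture is not proved in this paper. It is stated as a conjecture of Freedman and Hastings, and the paper's actual results (Theorems~\ref{theorem-lngs}, \ref{theorem-superpoly}, \ref{theorem-superpoly-oracle} and the qLWC construction) concern the strictly weaker low-error/noisy-ground-state notion; indeed, Section~\ref{sec:implications} explicitly argues that the paper's own 1D construction \emph{cannot} be NLTS, precisely to emphasize the gap between the two notions of approximation. So there is no paper proof to match your attempt against, and your attempt does not close the gap either: it is a conditional program, not a proof. Your first step presupposes a family of quantum codes with constant locality, linear rate, linear distance, \emph{and} a local-testability-style soundness guarantee (low energy implies trace-distance closeness to the codespace). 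No such family is known; as you acknowledge, even good qLDPC codes without testability are open at the parameters you need, and the paper's approximate qLWC codes do not supply testability (their checks are non-commuting and are not used as syndrome measurements at all). Since the conjecture asserts the \emph{existence of an explicit family} of Hamiltonians, an argument conditioned on an unproven coding-theoretic hypothesis establishes nothing about the statement.

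Two secondary issues would also need repair even granting the hypothesis. First, the "rounding step" cannot simply be "absorbed into the definition of $\delta$-approximate circuit depth": the light-cone argument in Section~\ref{simple-proof} only tolerates $\delta$ bounded away from specific constants (e.g.\ $\delta < \frac{1}{8} - 6\eps$ there), so you must make the energy-to-distance degradation quantitative and show it lands below that threshold for a universal $\eps$. Second, the "codeword step" via local indistinguishability of linear-distance codewords is fine for ruling out constant depth, but your claim of an $\Omega(\log n)$ bound after chaining is unsupported without tracking how the rounding loss enters the light-cone counting. The final paragraph about the combinatorial NLTS variant and high-dimensional expanders is a reasonable research direction (and echoes the paper's open problems), but it is speculation, not an argument.
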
 

\begin{theorem}[NLETS Theorem~\cite{DBLP:conf/focs/EldarH17}]
\label{theorem-eldar-harrow}
	There exists a family of $16$-local Hamiltonians $\{H^{(n)} \}$ such that any family of $\eps$-error states $\{ \ket{\Phi_n} \}$ for $\{ H^{(n)} \}$ requires circuit depth $\Omega(\log n)$, where $\eps = 10^{-9}$.
\end{theorem}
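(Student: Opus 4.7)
The plan is to bypass Eldar--Harrow's hypergraph-product code construction entirely and instead build a Feynman--Kitaev clock Hamiltonian associated with the simplest possible circuit that produces a cat state. Let $C = C_T \cdots C_1$ with $T = n-1$ be the standard depth-$O(n)$ circuit generating $\ket*{\Cat_n} = (\ket{0^n} + \ket{1^n})/\sqrt{2}$ from $\ket{0^n}$: one Hadamard followed by a chain of CNOTs. I would construct the associated clock Hamiltonian $H^{(n)} = H_{\mathrm{init}} + H_{\mathrm{clock}} + \sum_t H_{\mathrm{prop}}^{(t)}$ on the data register together with a unary clock register, so that standard bookkeeping makes $H^{(n)}$ at most $16$-local as required by the statement. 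Its unique zero-energy ground state is the history state
\[
\ket{\eta} \;=\; \frac{1}{\sqrt{T+1}} \sum_{t=0}^{T} \ket{t}_{\mathsf{clock}} \otimes \ket{\psi_t},
\qquad \ket{\psi_t} = C_t \cdots C_1 \ket{0^n}.
\]

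The core technical step is to show that any $\eps$-error state $\ket{\Phi_n}$ inherits cat-like two-point correlations on a well-chosen pair of data qubits. Fix the corruption set $S$ of size at most $\eps n$ witnessing the $\eps$-error property. Pick a pair of data qubits $i,j$ both outside $S$ that are separated by distance $\Omega(n)$ on the line; a simple counting argument produces such a pair whenever $\eps$ is a sufficiently small constant. Conditioning the history state on the clock event ``$t \geq n/2$'' produces, on the data register, a mixture of states $\ket{\psi_t}$ each of which has $(i,j)$-marginal equal to $\tfrac12(\ketbra{00}{00} + \ketbra{11}{11})$. Since $S$ avoids both $\{i,j\}$ and a small portion of the clock register sufficient to certify ``$t \geq n/2$,'' the corresponding reduced density matrix of $\ket{\Phi_n}$ differs from this cat-like marginal by at most $O(\eps)$ in trace distance.

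The final step is a standard light-cone argument. If $\ket{\Phi_n}$ could be prepared by a circuit of depth $d$, then the reduced density matrix on any pair of qubits at graph-distance greater than $2 \cdot 2^d$ in the circuit's interaction graph is a product state. Our chosen pair $(i,j)$ has distance $\Omega(n)$, so $d < \tfrac12 \log n$ forces the $(i,j)$-marginal of $\ket{\Phi_n}$ to be a product state. But $\tfrac12(\ketbra{00}{00} + \ketbra{11}{11})$ sits at constant trace distance from every product state, so choosing $\eps = 10^{-9}$ makes the $O(\eps)$ slack strictly smaller than this gap, yielding the desired contradiction and thus an $\Omega(\log n)$ lower bound on the circuit depth.

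The step I expect to be most delicate is the coupling between the clock and data registers in the presence of the adversarial corruption $S$: a priori $S$ could concentrate on the clock qudits near the end of the computation and destroy any ability to certify ``late times.'' Two remedies are available: first, use a sufficiently redundant unary clock so that the event ``$t \geq n/2$'' can be read out from any one of $\Omega(n)$ disjoint clock fragments, only $\eps n$ of which can be corrupted; alternatively, replace exact conditioning on a single clock value by a coarse POVM averaging over a linear window of clock values, for which the analogous cat-marginal holds on average. Either fix preserves the $(i,j)$-marginal up to $O(\eps)$ error, and tracking constants through both the corruption accounting and the light-cone step is what determines the $16$-locality and $\eps = 10^{-9}$ parameters stated in Theorem~\ref{theorem-eldar-harrow}.
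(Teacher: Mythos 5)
Your proposal follows the same high-level strategy as the paper's proof (which actually establishes the stronger Theorem~\ref{theorem-lngs}, from which Theorem~\ref{theorem-eldar-harrow} follows): build a Feynman--Kitaev clock Hamiltonian for the Hadamard-plus-CNOT-chain circuit generating $\ket*{\Cat_n}$, and use a light-cone argument to show the history state and its low-error corruptions require $\Omega(\log n)$ circuit depth. The gap is precisely the step you flag as delicate, and your proposed remedies do not close it.

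The paper sidesteps the ``read the clock'' problem by fusing the $\time(i)$ and $\state(i)$ qubits into a single qutrit, after which the two observables $A_i = \ketbra{0}{0}_i + \ketbra{2}{2}_i$ and $B_j = \ketbra{1}{1}_j + \ketbra{2}{2}_j$ act on a \emph{single site each}. The key point is that each site locally carries the information ``has the CNOT wave reached me yet?'' (the $\ket{2}$ state), so no conditioning on the $\time$ register and no certification of a window like $t \geq n/2$ is needed; $\Tr(A_i \otimes B_j \ketbra{\Psi}{\Psi})$ is small directly, while $\Tr(A_i \ketbra{\Psi}{\Psi})$ and $\Tr(B_j \ketbra{\Psi}{\Psi})$ are each bounded below by a constant. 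Your remedy (a) is not correct as stated: in a unary clock the threshold event ``$t \geq s$'' is witnessed by the single qubit at position $s$, not by $\Omega(n)$ disjoint clock fragments, so an adversary who corrupts that one qubit can indeed corrupt the certificate. Your remedy (b) --- averaging over a coarse window --- is vague but is essentially what the paper's observables do automatically, and the point is that one should express this as expectation values of local operators rather than as post-selection, because the light-cone lemma (Proposition~\ref{prop-lightcone}) factorizes expectation values $\Tr(A \otimes B\, \rho)$ of operators with disjoint light cones, not post-selected marginals.

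Two further small points. First, the light-cone argument does not turn on $i,j$ being ``separated by distance $\Omega(n)$ on the line'': a depth-$d$ circuit need not respect the lattice geometry, so what matters is that the shadow of the effect zone of $A_i$ (size at most $2^{2d+1}$) misses site $j$, which a counting argument guarantees once $2^{2d+1}$ is smaller than the number of uncorrupted sites. Second, the paper's qutrit fusion also yields a $3$-local Hamiltonian on a line (far tighter than $16$-local), whereas a separate unary clock gives at best the standard $5$-local clock Hamiltonian; this tightening is precisely what powers the separation between NLETS and NLTS discussed in Section~\ref{sec:implications}, and your version loses it.
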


\subsection{Circuits}
\label{sec-prelim-circuits}

\begin{definition}[Circuit depth/size]
	Let $U$ be a unitary acting on $(\CC^q)^{\otimes n}$ such that $U = U_m \cdots U_1$ where each $U_i$ is a unitary acting on at most two qudits (called a \emph{gate}). We say that $U$ has \emph{circuit size $m$}, and has \emph{circuit depth $d$} if there exists a partition of $\{ U_i \}$ into $d$ \emph{layers} $L_1,\ldots,L_d$ where each layer $L_j$ is a set of non-overlapping two-qudit unitaries and
		\begin{equation}
			U = \left ( \bigotimes_{i \in L_d} U_i \right) \cdots \left ( \bigotimes_{i \in L_1} U_i \right).
		\end{equation}
		In other words, $U$ can be written as a product of $d$ layers of a tensor product of disjoint two-local unitaries.
\end{definition}

\paragraph{Lightcones.} 
Let $U = L_d \cdots L_1$ be a depth-$d$ circuit acting on $(\CC^q)^{\otimes n}$, where each $L_j = \bigotimes_i U_{ji}$ is a tensor product of disjoint two-qudit unitaries $U_{ji}$. Let $A$ be an operator.

Define $K^{(d)}$ to be the set of two-qudit gates in layer $d$ whose supports overlap with that of $A$. Now for every $j = d-1,\ldots,1$, define $K^{(j)}$ to be the set of two-qudit gates in layer $j$ whose supports overlap with any gate in $K^{(j+1)},\ldots,K^{(d)}$. The \emph{lightcone} of the operator $A$ with respect to $U$ is defined as the union of these sets:
\begin{equation}
	K \defeq \bigcup_j K^{(j)}.
\end{equation}
In other words, the lightcone of $A$ is the set of gates emanating from $A$ to the first layer of the circuit. Furthermore, we write $\supp(K)$ to denote the set of qudits that are touched by the lightcone of $A$. Observe that if $A$ acts on a single qudit, then $|\supp(K)| \leq 2^d$.

\begin{claim}
\label{clm:lightcone}
	Let $U$ be a depth-$d$ circuit with two-local gates, and let $A$ be a operator acting on a single qudit. The number of qudits whose associated lightcones intersect the lightcone of $A$ is at most $2^{2d + 1}$.	
\end{claim}
\begin{proof}
Define $E^{(1)} = K^{(1)}$, and define $E^{(2)}$ to be the set of two-qudit gates in layer $2$ whose supports overlap with $K^{(1)}$. For $j = 3,\ldots,d$, define $E^{(j)}$ to be the set of two-qudit gates in layer $j$ whose supports overlap with any gate in $E^{(j-1)}$. Define the \emph{effect zone} of the operator $A$ with respect to $U$ to be the union of these sets:
\begin{equation}
	E_U(A) \defeq \bigcup_j E^{(j)}.
\end{equation}
In other words, the effect zone of an operator $A$ is essentially the ``bounceback of the lightcone'' of $A$: i.e., the set of gates emanating from the first layer $K^{(1)}$ (the ``widest part'' of the lightcone) to the last layer of the circuit. 

Now finally define the \emph{shadow of the effect zone} of $A$ with respect to $U$ to be the set $W_U(A)$ of qudits that are acted on by the gates in $E_U(A)$. Since $K^{(1)}$ has at most $2^d$ gates, $|E_U(A)| \leq 2^d |K^{(1)}| \leq 2^{2d}$. Therefore the size of the shadow is at most $2^{2d + 1}$ because each gate can act on at most $2$ qudits. 

It follows that the shadow of the effect zone are all the qudits whose lightcones could intersect the lightcone of $A$.

\end{proof}

\subsection{States and complexity}
We will use $\|A \|_1$ to denote the trace norm $\Tr(\sqrt{A A^\dagger})$ of an operator $A$. In addition, $D(\Hh)$ denotes the space of positive semidefinite operators of trace norm 1 on $\Hh$.
\begin{definition}[Approximate circuit depth/approximate circuit complexity]
	A state $\rho$ has $\delta$-approximate circuit depth $D$ (resp. $\delta$-approximate circuit size $S$) iff there exists a state $\sigma$ such that $\| \rho - \sigma \|_1 \leq \delta$ and the circuit depth of $\sigma$ is $D$ (resp. the circuit size of $\sigma$ is $S$).
\end{definition}

\begin{definition}[Low error states]
	Let $\rho,\sigma \in D((\CC^d)^{\otimes n})$, and let $H$ be a local Hamiltonian acting on $(\CC^d)^{\otimes n}$. Then
	\begin{enumerate}
	 	\item We say that $\sigma$ is an $\eps$-error state of $\rho$ if there exists a subset $S \subseteq [n]$ of size at most $\eps n$ such that $\Tr_S(\rho) = \Tr_S(\sigma)$;
		\item We say that $\sigma$ is an $\eps$-error state for $H$ if there exists a state $\rho$ such that $\Tr(H \rho) = \lambda_{\min}(H)$ (i.e. is a ground state) and $\sigma$ is an $\eps$-error state of $\rho$.
	\end{enumerate}
\end{definition}

\begin{definition}[Noisy ground states]
A state $\rho \in D((\CC^d)^{\otimes n})$ is an $\eps$-noisy ground state of a local Hamiltonian $H$ acting on $(\CC^d)^{\otimes n}$ if $\rho$ can be expressed as a convex combination of $\eps$-error states for $H$. Equivalently, there exists a set of $\eps$-error states $\{\rho_i\}$ for $H$ and a probability distribution $\{p_i\}$ over them such that $\rho = \sum_i p_i \rho_i$.
\end{definition}

\begin{definition}[Unary clock]
For all $T \in \mathbb{N}$ and $t \in \{0, \ldots, T\}$ we define the $T$-qudit state $\ket{\unary(t,T)}$ to be
\begin{equation}
\ket{\unary(t,T)} = \ket{0}^{\otimes (T-t)} \otimes \ket{1}^{\otimes t}.
\end{equation}

\end{definition}

Some of our results will require a more succinct clock where we write the time as a coordinate in a $k$-dimensional cube of volume $T$.
\begin{definition}[$k$-dimensional clock]
For all $k,T \in \mathbb{N}$, let $d = \lceil T^{1/k} \rceil$ + 1. For all $t \in \{0, \ldots, T\}$, let $a_1,\ldots,a_k \in \{0,\ldots,d - 1\}$ be the unique solutions to $t = a_k d^{k-1} + \cdots + a_1$. We define the $k(d-1)$ qudit state $\ket{\clock_k(t,T)}$ to be %
\begin{equation}
\ket{\clock_k(t,T)} = \bigotimes_{i=1}^k \ket{\unary(a_i,d)}.%
\end{equation}
\end{definition}
Note that for $k = 1$, $\ket{\clock_k(t,T)} = \ket{\unary(t,T)}$. Furthermore, throughout this paper we generally will not specify the value of $T$. It will be assumed to be the minimal size necessary to express $\ket{\unary(t)}$ and $\ket{\clock_k(t)}$ for all $t$ involved in the analysis.

\begin{definition}[History state]
Let $C$ be a quantum circuit that acts on two registers, $\sf witness$ and $\sf ancilla$. Let $C_1,\ldots,C_T$ denote the sequence of two-local gates in $C$. Then for all $k \in \mathbb{N}$ a state $\ket{\Psi}\in \Hh_{\sf{time}} \otimes \Hh_{\sf{state}}$ is a \emph{$k$-dimensional history state of $C$} if
\begin{equation}
\ket{\Psi} = \frac{1}{\sqrt{T+1}} \sum_{t = 0}^T \ket{\clock_k(t)}_{\sf{time}} \otimes \ket{\psi_t}_{\sf{state}}
\end{equation}
where $\ket{\psi_0}_{\sf state} = \ket{\xi}_{\sf witness} \otimes \ket{0}_{\sf ancilla}$ for some state $\ket{\xi}$ and $\ket{\psi_t} = C_t \ket{\psi_{t-1}}$ for $t = 1,\ldots,T$. 

A state $\ket{\Psi}$ is a \emph{history state} if there is some $k$ and a circuit $C$ for which it is a $k$-dimensional history state for $C$.
\end{definition}
In this paper we will repeatedly invoke the following simple, but useful, Lemma.
\begin{lemma}
Let $\ket{\Psi} = \frac{1}{\sqrt{T+1}} \sum_{t=0}^T \ket{\clock_k(t)}_\time \otimes \ket{\psi_t}_\state$ be a history state on $m$ qudits and let $\ket{\Phi}$, $S \subseteq [m]$ be be such that $\Tr_S(\ketbra{\Phi}{\Phi}) = \Tr_S(\ketbra{\Psi}{\Psi})$. Then
\begin{equation}
\Tr_{S \cup \time}(\ketbra{\Phi}{\Phi}) = \frac{1}{T+1} \sum_{t=0}^T \Tr_{S \setminus \time}(\ketbra{\psi_t}{\psi_t}).
\end{equation}
\label{lemma-traceorder}
\end{lemma}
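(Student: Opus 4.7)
The plan is to first remove $\ket{\Phi}$ from the picture via a monotonicity argument, and then reduce the right-hand side to an explicit computation on $\ket{\Psi}$ that is driven entirely by the orthogonality of the clock register. The first step would be to note that partial traces on disjoint registers commute, and in particular $\Tr_{\time\setminus S}\circ\Tr_S = \Tr_{S\cup\time}$. Applying the channel $\Tr_{\time\setminus S}$ to both sides of the hypothesis $\Tr_S(\ketbra{\Phi}{\Phi}) = \Tr_S(\ketbra{\Psi}{\Psi})$ therefore yields $\Tr_{S\cup\time}(\ketbra{\Phi}{\Phi}) = \Tr_{S\cup\time}(\ketbra{\Psi}{\Psi})$, reducing the lemma to showing that the right-hand side is what the statement claims.

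For the second step I would expand $\ketbra{\Psi}{\Psi}$ as a double sum $\tfrac{1}{T+1}\sum_{t,t'}\ketbra{\clock_k(t)}{\clock_k(t')}_\time \otimes \ketbra{\psi_t}{\psi_{t'}}_\state$ using the definition of a history state, and apply $\Tr_\time$ first (legitimate since $\Tr_\time$ and $\Tr_{S\setminus\time}$ act on disjoint registers and so commute). The key observation is that each $\ket{\clock_k(t)}$ is a tensor product of $k$ unary clock states, which are distinct standard-basis vectors, so $\braket{\clock_k(t')}{\clock_k(t)} = \delta_{t,t'}$. Consequently all cross terms vanish under $\Tr_\time$ and only the diagonal survives, leaving $\Tr_\time(\ketbra{\Psi}{\Psi}) = \tfrac{1}{T+1}\sum_t \ketbra{\psi_t}{\psi_t}_\state$.

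The last step would be to apply $\Tr_{S\setminus\time}$ to this diagonal form and pull the trace through the sum by linearity, giving exactly $\tfrac{1}{T+1}\sum_t \Tr_{S\setminus\time}(\ketbra{\psi_t}{\psi_t})$. Combined with the first step, this is the desired identity. I do not anticipate any real obstacle here: the argument is essentially a bookkeeping exercise. The only point requiring a little care is the behavior when $S$ intersects the time register nontrivially, but this is precisely handled by splitting $S\cup\time$ as the disjoint union $\time \sqcup (S\setminus\time)$ before performing the two partial traces in sequence.
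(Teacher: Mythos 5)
Your proposal is correct and follows essentially the same route as the paper's proof: first pass to $\Tr_{S\cup\time}(\ketbra{\Phi}{\Phi}) = \Tr_{S\cup\time}(\ketbra{\Psi}{\Psi})$, then rewrite $\Tr_{S\cup\time}$ as $\Tr_{S\setminus\time}\circ\Tr_\time$ and use orthogonality of the clock states to reduce $\Tr_\time(\ketbra{\Psi}{\Psi})$ to $\frac{1}{T+1}\sum_t\ketbra{\psi_t}{\psi_t}$. The paper leaves the clock-state orthogonality implicit; you spell it out, but the argument is identical.
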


\begin{proof}
Tracing out $S$ followed by $\time \setminus S$ is equivalent to tracing out $\sf{time}$ followed by $S \setminus \sf{time}$. Therefore,
\begin{align}
\Tr_{S \cup \sf{time}}(\ketbra{\Phi}{\Phi}) &= \Tr_{S \cup \sf{time}}(\ketbra{\Psi}{\Psi}) \\
&= \Tr_{S \setminus \sf{time}}( \Tr_{\sf{time}}(\ketbra{\Psi}{\Psi})) \\
&= \Tr_{S \setminus \sf{time}} \left( \frac{1}{T+1} \sum_{t = 0}^T \ketbra{\psi_t}{\psi_t} \right) \\
&= \frac{1}{T+1}  \sum_{t = 0}^T  \Tr_{S \setminus \sf{time}} \left( \ketbra{\psi_t}{\psi_t} \right).
\end{align}
\end{proof}

\subsection{The Feynman-Kitaev clock construction}
\label{sec-clock-construction}

\subsubsection{Unary clock}

In this section, we review the Feynman-Kitaev clock construction~\cite{kitaev2002classical} and show that low-energy states of the clock Hamiltonian are close to unary history states. 

Let $C$ be a $\QMA$-verifier circuit acting on $n$ qubits\footnote{This generalizes to qudits in a straightforward way.} comprised of a sequence of two-local gates $C_1,\ldots,C_T$, with $T = \poly(n)$. The verifier circuit takes in input a witness $\ket{\xi}$ on $m \leq n$ qubits, and $n - m$ ancilla $\ket{0}$ states. The promise is that either
\begin{itemize}
\item (\textbf{YES} case) There exists a witness state $\ket{\xi}$ such that 
\begin{equation}
	\Pr \left ( \text{$C$ accepts $\ket{\xi}$} \right ) \geq 1 - \gamma
\end{equation}
or
\item (\textbf{NO} case) For all states $\ket{\xi}$
\begin{equation}
	\Pr \left ( \text{$C$ accepts $\ket{\xi}$} \right ) \leq \gamma.
\end{equation}
\end{itemize}
Through standard amplification techniques for $\QMA$, we will assume without loss of generality that $\gamma \leq \exp(-n^a)$ for some $a> 0$. The Feynman-Kitaev Hamiltonian corresponding to the circuit $C$ is defined to be the sum
\begin{equation}
H = H_\text{in} + H_\text{prop} + H_\text{out} + H_\text{stab}
\end{equation} 
acting on $\Hh = \Hh_{\time} \otimes \Hh_\state$ where $\Hh_\time = (\CC^{2})^{\otimes T}$ and $\Hh_\mathsf{state} = (\CC^2)^{\otimes n}$. The terms are defined as follows. The Hamiltonian $H_\text{prop} = \sum_{t=1}^T H_t$ enforces that the ground state is a history state of the circuit $C$, where
\begin{equation}
\label{eq-prop-term}
H_t = \half W_t \left ( \ket{10}_{t,t+1} - \ket{11}_{t,t+1} \right) \left ( \bra{10}_{t,t+1} - \bra{11}_{t,t+1} \right ) W_t^\dagger \otimes \ketbra{0}{0}_{t+2}
\end{equation}
The subscripts $t,t+1,t+2$ are shorthand for $\time(t),\time(t+1),\time(t+2)$, respectively. The operator $W_t$ is defined as
\begin{equation}
W_t = \sum_{t = 1}^T \ketbra{\unary(t)}{\unary(t)}_\time \otimes C_t C_{t-1} \cdots C_1.
\end{equation}
The term $H_\text{in}$ enforces that the history state initializes the last $n - m$ qubits of the computation in the all zeroes state:
\begin{equation}
	H_\text{in} = \ketbra{0}{0}_{\time(1)} \otimes \sum_{s = m+1}^n \ketbra{1}{1}_{\state(s)}.
\end{equation}
The term $H_\text{out}$ enforces that the history state encodes an accepting computation:
\begin{equation}
	H_\text{out} = \ketbra{1}{1}_{\time(T)} \otimes \ketbra{1}{1}_{\state(1)}. 
\end{equation}
Finally, the term $H_\text{stab}$ enforces that the $\time$ register is supported only on valid unary encodings:
\begin{equation}
	H_\text{stab} = \sum_{t = 1}^{T-1} \ketbra{01}{01}_{\time(t) \time(t+1)}.
\label{eq-stab-term}
\end{equation}

The Hamiltonian $H$ is $5$-local. The problem of estimating the minimum eigenvalue of $H$ to inverse polnyomial accuracy was proved to be $\QMA$-hard in~\cite{kitaev2002classical}.
\begin{theorem}[$\QMA$-completeness of Local Hamiltonians~\cite{kitaev2002classical}]
If there exists a witness state $\ket{\xi}$ such that $\Pr(C \text{ accepts } \ket \xi) \geq 1 - \gamma$ (i.e. $C$ is a {\em yes} instance), then 
\begin{equation}
	\lambda_{\min}(H) \leq \gamma/(T+1).
\end{equation}
If for all states $\ket{\xi}$ we have $\Pr(C \text{ accepts } \ket \xi) \leq \gamma$ (i.e. $C$ is a {\em no} instance), then
\begin{equation}
	\lambda_{\min}(H) \geq c(1-\sqrt{\gamma})T^{-3}
\end{equation}
 for some constant $c$.
\label{theorem-kitaev}
\end{theorem}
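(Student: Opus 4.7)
For the \emph{yes} case, the plan is constructive. Given an accepting witness $\ket{\xi}$ with acceptance probability at least $1-\gamma$, I form the canonical unary history state
\begin{equation}
\ket{\Psi} = \frac{1}{\sqrt{T+1}} \sum_{t=0}^T \ket{\unary(t,T)}_\time \otimes \ket{\psi_t}_\state,
\end{equation}
with $\ket{\psi_0} = \ket{\xi}_{\sf witness}\otimes\ket{0}_{\sf ancilla}$ and $\ket{\psi_t} = C_t \ket{\psi_{t-1}}$. I would verify term by term that $\bra{\Psi}H_\text{in}\ket{\Psi}=\bra{\Psi}H_\text{stab}\ket{\Psi}=\bra{\Psi}H_\text{prop}\ket{\Psi}=0$: the stab term vanishes because $\ket{\Psi}$ is supported on valid unary clock strings; the in term vanishes because $\ket{\psi_0}$ has zero ancilla; and each $H_t$ in $H_\text{prop}$ is built precisely so that the pair of amplitudes on clocks $t-1,t$ cancel given the propagation relation $\ket{\psi_t}=C_t\ket{\psi_{t-1}}$. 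The only surviving contribution is $\bra{\Psi}H_\text{out}\ket{\Psi} = \frac{1}{T+1}\Pr(C \text{ rejects }\ket{\xi}) \leq \gamma/(T+1)$, which establishes the upper bound on $\lambda_{\min}(H)$.

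For the \emph{no} case, the plan is to lower bound $H$ using Kitaev's standard "rotate then apply the Geometric Lemma" strategy. First, I would work in the $0$-energy subspace $\mathcal{S}_\text{stab}$ of $H_\text{stab}$, which is the span of states with clock registers supported on valid unary strings $\ket{\unary(t,T)}$ for $0\le t \le T$; the nonzero eigenvalues of $H_\text{stab}$ are all at least $1$, so any state with low total energy must have most of its weight on $\mathcal{S}_\text{stab}$. Next I would introduce the Feynman change of basis
\begin{equation}
W = \sum_{t=0}^T \ketbra{\unary(t,T)}{\unary(t,T)}_\time \otimes (C_t C_{t-1} \cdots C_1),
\end{equation}
and verify that $W^\dagger H_\text{prop} W$ restricted to $\mathcal{S}_\text{stab}$ equals $\Delta \otimes \II_\state$, where $\Delta$ is (half) the discrete Laplacian of the path graph on $T+1$ vertices. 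The spectral gap of $\Delta$ is the well-known $\Omega(1/T^2)$, and its kernel consists exactly of the uniform superposition over clocks; pulling back by $W$, the $0$-eigenspace of $H_\text{prop}|_{\mathcal{S}_\text{stab}}$ is precisely the space of history states $\ket{\Psi_{\ket\chi}}$ for arbitrary initial states $\ket{\chi}$.

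Having identified this kernel, I would apply Kitaev's Geometric Lemma to the two operators $A = H_\text{prop}+H_\text{stab}$ and $B = H_\text{in}+H_\text{out}$. The lemma gives $\lambda_{\min}(A+B) \geq 2\sin^2(\theta/2) \cdot \min(\text{gap}(A),\text{gap}(B))$, where $\theta$ is the minimum angle between $\ker(A)$ and $\ker(B)$. After conjugation by $W$, $\ker(A)\cap\mathcal{S}_\text{stab}$ becomes uniform superpositions of clocks tensored with arbitrary $\state$ states, so the relevant angle is determined entirely by how far an arbitrary $\ket\chi$ can be from violating both "ancilla initialized to $0$" and "output qubit accepts at time $T$": computing the overlap of $\ket{\Psi_{\ket\chi}}$ with $\ker(H_\text{in}+H_\text{out})$ and using that every $\ket\chi$ with zero ancilla satisfies $\Pr(C\text{ accepts}) \leq \gamma$, this overlap is bounded by a function of $\sqrt\gamma/T$. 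Combining with the $\Omega(1/T^2)$ propagation gap gives the claimed $c(1-\sqrt\gamma)/T^3$ lower bound.

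The delicate step is the angle bound in the Geometric Lemma: one has to handle states that are not exactly in $\mathcal{S}_\text{stab}$ by a perturbation argument using the unit gap of $H_\text{stab}$, and then carefully track how the $1/T$ weight of the $t=T$ slice in the history state dilutes the soundness gap $1-\gamma$ into an $\Omega((1-\sqrt\gamma)/T)$ overlap deficit before multiplying by the $\Omega(1/T^2)$ propagation gap. Keeping the polynomial in $T$ clean is where the proof traditionally requires the most care.
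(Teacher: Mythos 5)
The paper does not prove Theorem~\ref{theorem-kitaev}; it simply cites \cite{kitaev2002classical} and, in the surrounding text, only remarks that the \emph{yes} case follows by exhibiting the history state as a low-energy witness. Your proposal therefore cannot be compared against an in-paper proof, but it does correctly reproduce the standard Kitaev argument that the citation refers to: the explicit history state computation for completeness (with the surviving energy coming solely from $H_\text{out}$, namely $\frac{1}{T+1}\Pr(C\text{ rejects }\ket{\xi,0})\le\gamma/(T+1)$), and for soundness the conjugation by $W$ reducing $H_\text{prop}$ on the valid-clock subspace to (half) the path-graph Laplacian with $\Omega(1/T^2)$ gap, followed by the Geometric Lemma applied to $A=H_\text{prop}+H_\text{stab}$ and $B=H_\text{in}+H_\text{out}$, with the angle controlled by the soundness promise and yielding the $(1-\sqrt{\gamma})/T$ factor. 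This is faithful to the argument in Kitaev--Shen--Vyalyi and in Aharonov et al.\ and yields the stated $c(1-\sqrt{\gamma})T^{-3}$ bound; the only caveat is that the steps you flag as ``delicate'' (handling weight outside $\mathcal{S}_\text{stab}$ via the unit $H_\text{stab}$ gap, and tracking the $1/T$ dilution of the output penalty into the angle bound) are exactly where the real work in the cited references lies, and your sketch does not carry them out in full.
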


To prove the \emph{yes} case, one shows that a history state that encodes the computation of the verifier circuit $C$ on input $\ket{\xi,0}$ has energy at most $O(\gamma/T)$. In the next theorem, we show that in the \emph{yes} case \emph{all} low-energy states of $H$ are close to some unary history state.

\begin{theorem}
Let $\ket{\eta}$ be a state such that $\expval{H}{\eta} \leq \delta$. Then there exists a history state $\ket{\Psi}$ such that $\norm{\ket{\eta} - \ket{\Psi}} \leq \poly(T) \sqrt{\delta}$.
\label{theorem-closegroundstates}
\end{theorem}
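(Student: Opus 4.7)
The plan is to use the standard ``spectral gap + perturbation'' strategy tailored to the Feynman--Kitaev Hamiltonian, peeling off the four terms of $H$ in the order $H_\text{stab}, H_\text{prop}, H_\text{in}$, and absorbing $H_\text{out}$ trivially (it plays no role in the \emph{definition} of a history state).

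First I would handle the clock register. The term $H_\text{stab}$ is a sum of mutually orthogonal projectors onto the ``illegal'' clock patterns containing a $10$ substring, so $H_\text{stab} \succeq \Pi_\text{legal}^\perp$, where $\Pi_\text{legal}$ projects onto $\Hh_\text{legal} = \operatorname{span}\{\ket{\unary(t,T)} : 0 \leq t \leq T\} \otimes \Hh_\state$. The hypothesis $\expval{H}{\eta} \leq \delta$ therefore gives $\lVert \Pi_\text{legal}^\perp \ket{\eta} \rVert^2 \leq \delta$, so $\ket\eta$ is $\sqrt\delta$-close in norm to the normalized projection $\ket{\eta_1} \propto \Pi_\text{legal}\ket\eta$, which satisfies $\expval{H_\text{in} + H_\text{prop} + H_\text{out}}{\eta_1} = O(\delta)$.

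Second I would diagonalize $H_\text{prop}$ on $\Hh_\text{legal}$ via the Feynman--Kitaev isometry
\begin{equation}
W : \ket{\unary(t,T)}\otimes \ket\psi \;\longmapsto\; \ket{\unary(t,T)} \otimes C_t C_{t-1}\cdots C_1\ket\psi .
\end{equation}
A direct computation shows $W^\dagger H_\text{prop} W$ acts as $E_T \otimes \II_\state$ on $\Hh_\text{legal}$, where $E_T$ is the standard tridiagonal random-walk matrix on $T+1$ sites; its eigenvalues are $1 - \cos(k\pi/(T+1))$, the unique zero eigenvector is the uniform superposition, and the spectral gap is $\Omega(1/T^{2})$. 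Consequently the zero-energy eigenspace of $H_\text{prop}|_{\Hh_\text{legal}}$ is exactly the set of ``raw'' history states $\frac{1}{\sqrt{T+1}}\sum_t \ket{\unary(t,T)}\otimes C_t\cdots C_1\ket{\psi_0}$ indexed by an arbitrary initial state $\ket{\psi_0}$. From the spectral gap and $\expval{H_\text{prop}}{\eta_1} = O(\delta)$, it follows that $\ket{\eta_1}$ is $O(T\sqrt\delta)$-close in norm to some raw history state $\ket{\Psi_0}$ with initial state $\ket{\psi_0}$.

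Third I would enforce the ancilla initialization using $H_\text{in}$. Because $\ket{\Psi_0}$ differs from $\ket{\eta}$ by at most $\sqrt\delta + O(T\sqrt\delta)$ in norm and $\lVert H_\text{in}\rVert \leq \poly(T)$, we have $\expval{H_\text{in}}{\Psi_0} \leq \poly(T)\sqrt\delta$. On raw history states $H_\text{in}$ reduces to $\frac{1}{T+1}\II_\text{witness}\otimes(\II - \ketbra{0}{0})_\text{ancilla}$ applied to $\ket{\psi_0}$, so $\ket{\psi_0}$ lies within $\poly(T)\sqrt\delta$ in norm of some $\ket\xi_\text{witness}\otimes\ket{0}_\text{ancilla}$. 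Replacing $\ket{\psi_0}$ by $\ket\xi\otimes\ket 0$ and propagating forward by the $C_t$'s (which are unitary and hence norm-preserving at every time slice) yields a genuine history state $\ket\Psi$ with $\lVert \ket\Psi - \ket{\Psi_0}\rVert \leq \poly(T)\sqrt\delta$, and a triangle inequality through $\ket\eta \to \ket{\eta_1}\to\ket{\Psi_0}\to\ket\Psi$ gives the claimed bound.

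The main obstacle is purely bookkeeping: the errors compound across three stages, and one must be careful to track \emph{norm} distances (not squared norms) and to invoke the $\Omega(1/T^{2})$ gap of $E_T$ rather than the trivial operator-norm bound, since that gap is the sole source of the eventual $\poly(T)$ factor. No single step is deep; the $T^2$-gap estimate for the random-walk matrix is the classical Feynman--Kitaev analysis and accounts for the polynomial loss.
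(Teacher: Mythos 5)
Your three-stage peeling strategy is a more self-contained route than the paper's, which simply notes that $H' = H_\text{in}+H_\text{prop}+H_\text{stab}$ has zero ground energy with ground space exactly the history states, cites its $1/\poly(T)$ spectral gap as well known, and finishes in one step by projecting onto the ground space and applying the Gentle Measurement Lemma. Peeling is a legitimate way to ``unroll'' that spectral gap bound (indeed it is morally how one proves the gap), but as written your third stage does not reach the $\sqrt{\delta}$ in the theorem statement.

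The precise gap is in the handoff between stages two and three. After stage two you only know $\lVert\ket\eta - \ket{\Psi_0}\rVert = O(T\sqrt\delta)$, and from this plus $\lVert H_\text{in}\rVert\leq\poly(T)$ you derive $\expval{H_\text{in}}{\Psi_0}\leq\poly(T)\sqrt\delta$. But $H_\text{in}$ restricted to raw history states acts (after the $W$ rotation) as $\frac{1}{T+1}(\II-\Pi_\text{good})$ on the initial slice $\ket{\psi_0}$, so this energy bound controls the \emph{squared} distance $\lVert(\II-\Pi_\text{good})\ket{\psi_0}\rVert^2\leq (T+1)\cdot\poly(T)\sqrt\delta$, i.e.\ the norm distance to a legal initial state is only $\poly(T)\,\delta^{1/4}$. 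The triangle inequality then delivers $\poly(T)\,\delta^{1/4}$, not $\poly(T)\sqrt\delta$. This is precisely the compounding you flagged as ``bookkeeping,'' but it is not merely cosmetic; a naive operator-norm bound on the cross term $\bra{a}H_\text{in}\ket{b}$ (with $\ket{a}=\Pi_\text{hist}\ket{\eta_1}$ and $\ket{b}=\Pi_\text{hist}^\perp\ket{\eta_1}$) does lose a square root.

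The fix is to bound that cross term by Cauchy--Schwarz through $H_\text{in}^{1/2}$, i.e.\ $|\bra{a}H_\text{in}\ket{b}|\leq\sqrt{\bra{a}H_\text{in}\ket{a}}\sqrt{\bra{b}H_\text{in}\ket{b}}$, together with $\bra{b}H_\text{in}\ket{b}\leq\lVert H_\text{in}\rVert\lVert\ket b\rVert^2 = O(\poly(T)\delta)$. Solving the resulting quadratic inequality in $\sqrt{\bra{a}H_\text{in}\ket{a}}$ gives $\expval{H_\text{in}}{\Psi_0}=O(\poly(T)\delta)$ rather than $O(\poly(T)\sqrt\delta)$, and then the third stage yields $\lVert(\II-\Pi_\text{good})\ket{\psi_0}\rVert = O(\poly(T)\sqrt\delta)$ as required. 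The paper sidesteps this entirely by treating $H'$ as a single positive operator with a known gap, which is shorter but hides the Cauchy--Schwarz inside the cited gap lemma; your decomposition makes it visible and must therefore reproduce it explicitly.
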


\begin{proof}
Let $H'$ be the partial Hamiltonian $H_\text{in} + H_\text{prop} + H_\text{stab}$. As every term of $H$ is postive semidefinite, $\expval{H'}{\eta} \leq \delta$ as well. It is easy to see that $H'$ has a ground energy of 0 and the ground space is
\begin{equation}
G = \left\{ \ket{\Psi} : \ket{\Psi} \text{ a history state} \right\}.
\end{equation} 
It is well known that the spectral gap $\Delta$ of $H'$ is $1/\poly(T)$ (see, e.g.,~\cite{Aharonov:2007:AQC:1328722.1328726}). We then appeal to the Gentle Measurement Lemma~\cite{ogawa2002new} to prove that 
\begin{equation}
\norm{\ketbra{\eta}{\eta} - \ketbra{\Psi}{\Psi}}_1 \leq \sqrt{\delta / \Delta} = \poly(T) \sqrt{\delta}
\end{equation} where $\ket{\Psi} = \Pi \ket{\eta} / \sqrt{\expval{\Pi}{\eta}}$ for $\Pi$ the projector onto $G$.
\end{proof}

\subsubsection{$k$-dimensional clock}
Kitaev's original clock construction is 5-local because each term of the Hamiltonian checked at most 3 qubits of the $\time$ register and at most 2 qubits of the $\state$ register. This, however, required that the length of the $\time$ register was equal to the size of the circuit. We modify this construction by instead storing the time as a $\ket{\clock_k(\cdot)}$ state. The resulting Hamiltonian will be $2k + 3$-local as now $2k + 1$ registers will be required to keep track of the $k$ dimensions of the $\time$ register. It is not difficult to see that the proof of Theorem \ref{theorem-kitaev} translates directly to $k$-dimensional clocks. 

We provide a full description of the Hamiltonian in Appendix \ref{appendix-kitaevclock} for the interested reader. The proofs in this paper will not require knowledge of the explicit construction. %

\section{A simple proof of Logarithmic noisy ground states}
\label{simple-proof}

\lngs*

The first step in the proof is to create the 5-local Hamiltonian corresponding to a circuit generating the cat state $\ket*{\Cat_n}$. This Hamiltonian has a logarithmic circuit lower-bound but is not spatially local. We then transform the Hamiltonian into a one-dimensional system of qutrits with $3$-local interactions. \\

\begin{proof}
Fix an $n \in \NN$. Consider the circuit in Figure~\ref{fig:cat-ckt}.
\begin{figure}
\centering
$
\Qcircuit @C=.7em @R=.4em @! {
\lstick{\ket{0}} & \gate{H} & \ctrl{1} & \qw & \qw & \qw &\\
\lstick{\ket{0}} & \qw & \targ & \ctrl{1} & \qw & \qw& \\
\lstick{\ket{0}} & \qw & \qw   & \targ  & \ctrl{1} & \qw & \cdots \\
\lstick{\ket{0}} & \qw & \qw   & \qw 	& \targ  & \qw	& \\
\qquad \qquad \qquad \qquad \qquad \vdots \\
\vspace{10em}
}
$
\caption{Circuit $C$ to generate the $\ket*{\Cat_n}$ state.}
\label{fig:cat-ckt}
\end{figure}
When extended appropriately, this circuit generates the cat state, $\ket*{\Cat_n} = \frac{1}{\sqrt{2}} \left ( \ket{0}^{\otimes n} + \ket{1}^{\otimes n} \right)$. %

We first start with the $5$-local unary clock Feynman-Kitaev Hamiltonian $H = H_\text{in} + H_\text{prop} + H_\text{stab}$ based on $C$, where $H_\text{prop}$ and $H_\text{stab}$ are as in Section~\ref{sec-clock-construction} and we have a term that checks that the input to the circuit $C$ is initialized to all zeroes:
\begin{equation}
	H_\text{in} = \sum_{i = 1}^n \Pi_{\time(1)}^{(1)} \otimes \Pi_{\state(i)}^{(1)}.
\end{equation}

The ground energy of $H$ is $0$ and, furthermore, it has a unique ground state, which is the history state $\ket{\Psi}$ corresponding to running circuit $C$ on input $\ket{0}^{\otimes n}$:
\begin{equation}
\ket{\Psi} = \frac{1}{\sqrt{n + 1}} \sum_{t = 0}^{n} \ket{\unary(t)}_\time \otimes \ket{\psi_t}_\state = \frac{1}{\sqrt{n+1}} \sum_{t = 0}^{n} \ket{\unary(t)}_\time \otimes \ket*{\Cat_t} \otimes \ket{0}^{\otimes (n-t)}.
\end{equation}

However, this Hamiltonian is not geometrically local, so we now apply a transformation inspired by that of Aharonov et. al. ~\cite{Aharonov:2007:AQC:1328722.1328726} to embed the Hamiltonian on a one dimensional lattice (i.e. a line). Suppose we overlay the $n$ qubits of the $\time$ register onto of the $n$ qubits of the $\state$ register, and now fuse the $\time(i)$ qubit with the $\state(i)$ qubit into a single particle of dimension $4$. When viewed as acting on this space of dimension-$4$ particles $\time(i) \otimes \state(i)$, the Hamiltonian $H$ is geometrically local whose geometry is the line, since the gates of circuit $C$ act only between consecutive neighboring $\state$ qubits. 
In particular, we note that term $H_t$ of $H_\text{prop}$ acts solely on qubits $\time(t-1), \time(t), \time(t+1), \state(t-1), \state(t)$, so $H$ is $3$-geometrically local.

\begin{figure}[h!]
\centering
\begin{tikzpicture}[scale = 1.2];

\draw (1, 1) ellipse (0.25 and 0.25);
\node[align=center] at (1,1) {$1$};

\draw (2, 1) ellipse (0.25 and 0.25);
\node[align=center] at (2,1) {$1$};

\draw (3, 1) ellipse (0.25 and 0.25);
\node[align=center] at (3,1) {$1$};

\draw (4, 1) ellipse (0.25 and 0.25);
\node[align=center] at (4,1) {$0$};

\draw (5, 1) ellipse (0.25 and 0.25);
\node[align=center] at (5,1) {$0$};

\draw [decorate,decoration={brace,amplitude=3pt}]
(6,1.25) -- (6,0.75) node [black,midway,xshift=20pt] 
{$\time$};

\draw (1, 0) ellipse (0.25 and 0.25);
\node[align=center] at (1,0) {$\Cat$};

\draw[snake it] (1.25,0) -- (1.75,0);

\draw (2, 0) ellipse (0.25 and 0.25);
\node[align=center] at (2,0) {$\Cat$};

\draw[snake it] (2.25,0) -- (2.75,0);

\draw (3, 0) ellipse (0.25 and 0.25);
\node[align=center] at (3,0) {$\Cat$};

\draw (4, 0) ellipse (0.25 and 0.25);
\node[align=center] at (4,0) {$0$};

\draw (5, 0) ellipse (0.25 and 0.25);
\node[align=center] at (5,0) {$0$};

\draw [decorate,decoration={brace,amplitude=3pt}]
(6,0.25) -- (6,-0.25) node [black,midway,xshift=20pt] 
{$\state$};

\draw[dashed, very thin] (0.65,1.35) -- (1.35,1.35) -- (1.35,-0.35) -- (0.65,-0.35) -- cycle;
\draw[dashed, very thin] (1.65,1.35) -- (2.35,1.35) -- (2.35,-0.35) -- (1.65,-0.35) -- cycle;
\draw[dashed, very thin] (2.65,1.35) -- (3.35,1.35) -- (3.35,-0.35) -- (2.65,-0.35) -- cycle;
\draw[dashed, very thin] (3.65,1.35) -- (4.35,1.35) -- (4.35,-0.35) -- (3.65,-0.35) -- cycle;
\draw[dashed, very thin] (4.65,1.35) -- (5.35,1.35) -- (5.35,-0.35) -- (4.65,-0.35) -- cycle;

\node[align=center] at (1,-1) {$\Downarrow$};
\node[align=center] at (2,-1) {$\Downarrow$};
\node[align=center] at (3,-1) {$\Downarrow$};
\node[align=center] at (4,-1) {$\Downarrow$};
\node[align=center] at (5,-1) {$\Downarrow$};

\draw (1, -2) ellipse (0.25 and 0.25);
\node[align=center] at (1,-2) {$\Cat$};

\draw[snake it] (1.25,-2) -- (1.75,-2);

\draw (2, -2) ellipse (0.25 and 0.25);
\node[align=center] at (2,-2) {$\Cat$};

\draw[snake it] (2.25,-2) -- (2.75,-2);

\draw (3, -2) ellipse (0.25 and 0.25);
\node[align=center] at (3,-2) {$\Cat$};

\draw (4, -2) ellipse (0.25 and 0.25);
\node[align=center] at (4,-2) {$2$};

\draw (5, -2) ellipse (0.25 and 0.25);
\node[align=center] at (5,-2) {$2$};

\end{tikzpicture}
\caption{Transformation of Hamiltonian from qubits to qutrits.}
\label{fig-registers-in-line}
\end{figure}

Finally, we observe that one of the degrees of freedom of the fused particle is unused, as $\bra{\Psi} \left( \ket{0}_{\time(i)} \otimes \ket{1}_{\state(i)} \right) = 0$. Therefore, we can group qubits $\time(i)$ and $\state(i)$ into a dimension-3 qudit (i.e. \emph{qutrit}) under the mapping
\begin{equation}
\ket{1}_{\time(i)} \ket{x}_{\state(i)} \mapsto \ket{x}_i \text{ for } x \in \{0,1\}, \qquad \ket{0}_{\time(i)} \ket{0}_{\state(i)} \mapsto \ket{2}_i.
\end{equation}
Figure \ref{fig-registers-in-line} shows how the qubits containing $\ket{\unary(3)}_\time \otimes (\ket*{\Cat_3} \otimes \ket{0}^{\otimes 2})_\state$ are mapped into qutrits.

The unique ground state of $H$, under this mapping, is now
\begin{equation}
\ket{\Psi} = \frac{1}{\sqrt{n + 1}} \sum_{t = 0}^{n} \ket{\psi_t} = \frac{1}{\sqrt{n+1}} \sum_{t = 0}^{n} \ket*{\Cat_t} \otimes \ket{2}^{\otimes (n-t)}.
\end{equation}

Let $\sigma$ be an $\eps$-noisy ground state of $\ket{\Psi}$, which is now a $n$ qutrit state. Then we can express $\sigma$ as $\sum_\ell p_\ell \sigma_\ell$ where each $\sigma_\ell$ is an $\eps$-error state of $\ketbra{\Psi}{\Psi}$. Let $S_\ell$ be the subset of qutrits of size at most $\eps n$ for which $\Tr_{S_\ell}(\sigma_\ell) = \Tr_{S_\ell}(\ketbra{\Psi}{\Psi})$.

We now make a simple observation. For $i \in [n]$, let $A_i = \ketbra{0}{0}_{i}$ be the projector onto the $\ket{0}$ state of the $i$th qutrit. Similarly, let $B_i = \ketbra{1}{1}_i$. Then we have that for all $\ell$, for all $i < j$ that are not in $S_\ell$, 
\begin{align}
\Tr( A_i \otimes B_j \sigma_\ell) &= \Tr( A_i \otimes B_j \Tr_{S_\ell}(\sigma_\ell)) \\
&= \Tr( A_i \otimes B_j \Tr_{S_\ell}(\ketbra{\Psi}{\Psi})) \\
&= \Tr( A_i \otimes B_j \ketbra{\Psi}{\Psi}) \\
&= \frac{1}{n+1} \left \| \sum_{t} A_i \otimes B_j \ket{\psi_t} \right \|^2 \\
&= 0.
\label{eq-oppositesignequalzero1}
\end{align}
This is because of the following reasoning: fix a $t$, and suppose $j > t$. Then the ``wave of CNOT'' gates has not reached the $j$'th qutrit yet, so the $j$'th qutrit of $\ket{\psi_t}$ is in the state $\ket{2}$, meaning that $B_j \ket{\psi_t} = 0$. On the other hand, suppose $j \leq t$; we have $A_i \otimes B_j \ket{\psi_t} = 0$, because the operator $A_i$ will collapse the cat state into the all zeroes state, which has zero overlap with the projector $B_j$. %

For each index $i \in \{1, \ldots, n\}$, call qutrit $i$ good if the probability (with respect to the distribution $\{ p_\ell \}$) that $i$ is in $S_\ell$ is at most $4\eps$.
By a simple counting argument, we know that there must be at least $(3/4)n$ good qutrits. %

Let $d < (1/2) \log_2(n/2) - 1$. Suppose for contradiction there existed a depth-$d$ two-local circuit $U$ that, after tracing out all but $n$ qutrits, generates a density matrix $\sigma'$ that is $\delta$-close to $\sigma$ in trace distance. Then since the lightcone of any qutrit $i$ intersects the lightcones of at most $2^{2d + 1} < n/2$ qutrits, there exists good qutrits $i \leq n/4$ and $j \leq (3/4)n$ whose lightcones do not intersect.
Claim~\ref{clm:lightcone} then implies that the lightcones of $A_i$ and $B_j$ are disjoint. The following Proposition, proved in Appendix \ref{appendix-prop-lightcone}, shows the measurements of disjoint observables on qutrits $i$ and $j$ have uncorrelated outcomes:
\begin{restatable}[]{proposition}{lightconeprop}
\label{prop-lightcone}
Let $\rho$ be a density matrix acting on an $n$-qudit state generated by a two-local quantum circuit $U$ after tracing out some qudits. Let $A,B$ be operators whose lightcones with respect to $U$ do not intersect. Then
\begin{equation}
	\Tr(A \otimes B \rho) = \Tr(A \rho) \cdot \Tr(B \rho).
\end{equation}
\end{restatable}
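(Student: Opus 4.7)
The plan is to work in the Heisenberg picture and peel away the gates of $U$ that lie outside the lightcones of $A$ and $B$. Writing $\rho = \Tr_T(U \ketbra{0^N}{0^N} U^\dagger)$ with $T$ the set of traced-out qudits, we have $\Tr(A \otimes B \rho) = \bra{0^N} U^\dagger (A \otimes B) U \ket{0^N}$, and the goal is to show this factorizes.

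First, I would prove a Heisenberg lemma: $U^\dagger A U = U_A^\dagger A U_A$, where $U_A$ denotes the sub-circuit consisting only of the gates in $K_U(A)$, extended by identity to act as a unitary on the shadow $V_A = \supp(K_U(A))$. The argument is by backward induction on the layer index $j$: the partially evolved operator $L_{j+1}^\dagger \cdots L_d^\dagger A L_d \cdots L_{j+1}$ is supported on the qudits touched by $K^{(j+1)} \cup \cdots \cup K^{(d)}$, and by the very definition of the lightcone, any gate in layer $j$ that is \emph{not} in $K^{(j)}$ has support disjoint from this set, so it commutes past the evolved operator and cancels against its inverse. Only the lightcone gates contribute, and the analogous identity holds for $B$.

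Second, I would upgrade the hypothesis $K_U(A) \cap K_U(B) = \emptyset$ to the stronger conclusion $V_A \cap V_B = \emptyset$. If some qudit $q$ were in both shadows, there would exist gates $g_A \in K_U(A)$ and $g_B \in K_U(B)$ both acting on $q$ at respective layers $j_A$ and $j_B$; whichever of the two sits at the later layer would force the other into its lightcone by the backward-propagation rule, contradicting the disjointness of the lightcones.

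Combining the two ingredients, $U^\dagger(A \otimes B) U = (U_A^\dagger A U_A) \otimes (U_B^\dagger B U_B)$, with the two tensor factors acting on disjoint qudits $V_A$ and $V_B$. Since $\ket{0^N}$ is a product state across $V_A$, $V_B$, and the remaining qudits, the expectation factorizes as
\begin{equation*}
\bra{0^N} U_A^\dagger A U_A \ket{0^N} \cdot \bra{0^N} U_B^\dagger B U_B \ket{0^N} = \Tr(A \rho) \cdot \Tr(B \rho),
\end{equation*}
where the last equality is the Heisenberg lemma applied in reverse to each factor (the qudits outside $V_A$, respectively $V_B$, contribute $1$ by normalization of $\ket{0^N}$). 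The main obstacle is the inductive Heisenberg lemma itself: conceptually transparent, but requiring careful bookkeeping of operator supports layer by layer, which is exactly the content of the lightcone and effect-zone definitions in the preliminaries.
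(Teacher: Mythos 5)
Your proof is correct and follows essentially the same layer-peeling strategy as the paper: conjugate $A \otimes B$ by $U$ in the Heisenberg picture, observe that gates outside the lightcones commute through and cancel, reduce to the two lightcone sub-circuits on disjoint supports, and factor against the product state $\ket{0^N}$. You are somewhat more careful than the paper in explicitly proving that disjoint lightcones force disjoint shadows (the paper writes $K(A) \otimes K(B)$ and tacitly assumes this), but the underlying argument is the same.
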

This Proposition implies
\begin{align}
	\Tr(A_i \otimes B_j \sigma) &\geq \Tr(A_i \otimes B_j \sigma') - \delta \\
							&= \Tr(A_i \sigma') \cdot \Tr(B_j \sigma') - \delta \\
							&\geq \left ( \Tr(A_i \sigma) - \delta \right) \left(\Tr(B_j \sigma) - \delta \right) - \delta \label{eq:prod-of-small-tr}
\end{align}
where in the equality we use the Proposition, and in the two inequalities we use the fact that $\| \sigma - \sigma' \|_1 \leq \delta$. 

Call $\ell$ \emph{good} if both $i \notin S_\ell$ and $j \notin S_\ell$. By a union bound, at least $1 - 8\eps$ fraction of the $\ell$ are good (with respect to the probability distribution $\{p_\ell\}$). Therefore,
\begin{align}
\Tr(A_i \otimes B_j \sigma) = \sum_\ell p_\ell \Tr(A_i \otimes B_j \sigma_\ell)
\leq 8\eps + \sum_{\text{good } \ell} p_\ell \Tr(A_i \otimes B_j \sigma_\ell)
= 8 \eps. \label{eq-oppositesignequalzero2}
\end{align}
The last equality follows from~\eqref{eq-oppositesignequalzero1}. We now lower bound $\Tr(A_i \sigma)$ and $\Tr(B_j \sigma)$.
\begin{align}
	\Tr(A_i \sigma) &= \sum_\ell p_{\ell} \Tr(A_i \sigma_\ell) \\
					&\geq \sum_{\text{good } \ell} p_\ell \Tr(A_i \sigma_\ell) - 4\eps \\
					&= \sum_{\text{good } \ell} p_\ell \Tr(A_i \ketbra{\Psi}{\Psi}) - 4\eps \\
					&\geq \frac{1}{n+1} \sum_{t,t'} \bra{\psi_{t'}} A_i \ket{\psi_t} - 12\eps \\
					&= \frac{1}{n+1} \sum_{t} \bra{\psi_{t}} A_i \ket{\psi_t} - 12\eps
\end{align}
where in the first inequality, we use the fact that $i$ is good for $1 - 4 \eps$ fraction of $\ell$'s; in the third line, we use the fact that $i \notin S_\ell$. In the fourth line, we use that the probability of good $\ell$ is at least $1 - 8\eps$. The last equality follows from the fact that $\bra{\psi_{t'}} A_i \ket{\psi_t} = 0$ whenever $t \neq t'$. 

If $i > t$, $\bra{\psi_t} A_i \ket{\psi_t} = 0$. If $i \leq t$, $\bra{\psi_{t}} A_i \ket{\psi_t} = 1/2$. Since $i \leq n/4$, we have $\Tr(A_i \sigma) \geq \frac{3}{8} - 12\eps$. Similarly, since $j \leq (3/4)n$, we have $\Tr(B_j \sigma) \geq \frac{1}{8} - 12\eps$.

Inputting these bounds into (\ref{eq:prod-of-small-tr}) and recalling the bound calculated in (\ref{eq-oppositesignequalzero2}), we have 
\begin{equation}
	 (1/8 - 12\eps - \delta)(3/8 - 12\eps - \delta) - \delta \leq \Tr(A_i \otimes B_j \sigma) \leq 8\eps,
\end{equation}
which is a contradiction for our choice of $\eps$ and $\delta$. Therefore, our assumption of $d < \half \log(n/2)$ is false, completing the proof.
\end{proof}

\section{Superpolynomial noisy ground states}

\label{qcmaneqqma}
\newcommand{\sR}{{\sf R}}

\superpoly* %

The proof proceeds as follows: we start with the fact that the local Hamiltonians problem is complete for the class $\QMA$. We argue that if all local Hamiltonians have noisy ground states that can be (approximately) generated by polynomial-sized circuits, then the local Hamiltonians problem can be solved in $\QCMA$: a classical description of a circuit for a noisy ground state of the local Hamiltonian can be used to certify, in quantum polynomial time, that a local Hamiltonian has a low energy ground state. Thus, if we assume that $\QCMA \neq \QMA$, this implies that there are families of local Hamiltonians with noisy ground states with superpolynomial circuit complexity.

\begin{proof}

\paragraph*{\textbf{A $\QMA$-complete language}.} Let $L = (L_{yes},L_{no})$ be the $\QMA$-complete language such that instances of $L$ are descriptions of verifier circuits that act on a witness state and some ancilla qubits. A circuit $C$ is in $L_{yes}$ if there exists a witness state $\ket{\xi}$ such that $C$ accepts $\ket{\xi,0}$ with probability at least $1 - \gamma$. A circuit $C$ is in $L_{no}$ if for all witness states, $C$ accepts $\ket{\xi,0}$ with probability at most $\gamma$. Via standard $\QMA$ amplification techniques, we will assume without loss of generality that $\gamma \leq \exp(-n^a)$ for some universal constant $a > 0$, where $n$ represents the number of qubits that $C$ acts on. Furthermore, through padding we can always assume that $C$ has fewer than $n$ gates.

\paragraph*{\textbf{A good error-correcting code}.} Let $\{ Q_k \}$ be an explicit family of quantum error-correcting codes that are asymptotically good. By explicit, we mean that there is a uniform family of polynomial-size circuits for encoding and decoding (which includes error-correction). By asymptotically good, we mean that each $Q_k$ is a $[[n,k,d]]$ code with $n = O(k)$ and $d = \Omega(n)$. An example of such a code is given in Appendix~\ref{sec:good_code}.

\paragraph*{\textbf{Getting an error-corrected circuit}.} We now describe a polynomial time  reduction that transforms an input $\QMA$-verification circuit $C$ into an equivalent circuit $C'$ such that deciding whether $C \in L_{yes}$ or $C \in L_{no}$ is equivalent to deciding whether $C'$ is a {\em yes} instance or a {\em no} instance. Let $k$ denote the number of qubits that $C$ acts on, and assume without loss of generality that the size of $C$ is at most $k$ (if not, then simply pad $C$ with unused ancilla qubits so that this is the case). Let $Q = Q_k$ be the $[[n,k,d]]$ code from the family above that encodes $k$ logical qubits into $n = \alpha k$ physical qubits and has distance $d = \beta n$ for some universal constants $\alpha, \beta > 0$.

Let $E$ and $D$ denote the encoding and decoding circuits for $Q$, respectively. For an $k$-qubit message $\ket{\psi}$, the unitary $E$ maps $\ket{\psi} \ket{0^{n - k}}$ to an $n$ qubit codeword. For an $n$-qubit codeword $\ket{\theta} = E (\ket{\psi} \ket{0^{n-k}})$ and any unitary error $U$ that acts on at most $(d-1)/2$ qubits, the unitary $D$ maps $(U \ket{\theta}) \ket{0^s}$ to $\ket{\psi} \ket{0^{n-k}} \ket{\tau_U}$. Here, $\ket{\tau_U}$ is some ``junk'' state that depends on $U$, and $s = O(n)$ is the number of ancilla qubits used by the decoding procedure.

Consider the following circuit $C'$. It acts on three registers $\sR_1 \sR_2 \sR_3$. The first register $\sR_1$ consists of $k$ qubits, and corresponds to the space acted upon by the circuit $C$. The second register $\sR_2$ consists of $n-k$ qubits, and corresponds to the ancillas used by the encoding procedure of the code $Q$. The third register $\sR_3$ consists of $s$ qubits, and corresponds to the ancillas used by the decoding procedure. In total, the circuit $C'$ acts on $O(n)$ qubits.

Let $T_E$, $T_D$, and $T_C$ denote the circuit sizes of $E$, $D$, and $C$ respectively. Let $K = T_E + T_D + T_C$. The structure of the circuit $C'$ is as follows: 
\begin{enumerate}
\item Apply the encoding circuit $E$ with message register $\sR_1$ and ancilla register $\sR_2$.
\item Apply the identity gate for $K$ time steps.
\item Apply the decoding circuit $D$ to registers $\sR_1 \sR_2 \sR_3$, where $\sR_1 \sR_2$ is treated as the codeword register, and $\sR_3$ is treated as the ancilla register.
\item Apply the circuit $C$ to register $\sR_1$.
\end{enumerate}
See Figure~\ref{fig:cprime} for a diagram of $C'$.
Observe that
\begin{equation}
	\frac{K}{T_E + K + T_D + T_C} = \half.
\end{equation}
In other words, the ``idle'' period of the circuit $C'$ is $50\%$ of the circuit size. Let $T_{enc} = T_E$ and $T_{dec} = T_E + K$. The waiting period of the circuit is from time steps $T_{enc}+1,\ldots,T_{dec}$. 
\\
\begin{figure}[h!]
\centering
$
\Qcircuit @C=2em @R=1em {
\lstick{} & \multigate{3}{\quad E\quad } & \qw & \multigate{5}{\qquad \II \qquad} & \qw & \multigate{5}{\quad D\quad } & \qw & \multigate{2}{\quad C\quad } & \qw \\
\lstick{} & \ghost{\quad E\quad } & \qw & \ghost{\qquad \II \qquad} & \qw & \ghost{\quad D\quad } & \qw & \ghost{\quad C\quad } & \qw \inputgroupv{1}{2}{.4em}{1.2em}{\ket{\psi} \quad }\\
\lstick{} & \ghost{\quad E \quad } & \qw & \ghost{\qquad \II \qquad} & \qw & \ghost{\quad D\quad } & \qw & \ghost{\quad C\quad } & \qw\\
\lstick{} & \ghost{\quad E\quad } & \qw & \ghost{\qquad \II \qquad} & \qw & \ghost{\quad D\quad } & \qw & \qw & \qw \inputgroupv{3}{4}{.4em}{1.2em}{\ket{0^{n-k}} \quad }\\
\lstick{} & \qw & \qw & \ghost{\qquad \II \qquad} & \qw & \ghost{\quad D\quad } & \qw & \qw & \qw\\
\lstick{} & \qw & \qw & \ghost{\qquad \II \qquad} & \qw & \ghost{\quad D\quad } & \qw & \qw & \qw 
\inputgroupv{5}{6}{.4em}{1.2em}{\ket{0^s} \quad }
}
$
\\
\caption{The circuit $C'$.}
\label{fig:cprime}
\end{figure} \ \\

\noindent Clearly, this circuit $C'$ is equivalent to $C$: it accepts a witness $\ket{\psi}$ with exactly the same probability as $C$ would. Let $c$ be a constant such that the circuit size $T_{C'}$ of $C'$ is at most $n^c$. 

\paragraph*{\textbf{Our family of local Hamiltonians.}} Let $C \in L_{yes}$ and let $C'$ denote the circuit obtained via the transformation described above. Let $H_C$ denote the Feynman-Kitaev clock Hamiltonian corresponding to $C'$ where clock is made $c$-dimensional, as described in Section~\ref{sec-clock-construction}. The Hamiltonian $H_C$ is $(2c+3)$-local and acts on $m = a cn$ qubits for some constant $a > 0$. Since $C$ is a {\em yes} instance, by Theorem~\ref{theorem-kitaev} the minimum energy of $H_C$ is at most $\gamma$. %

Our family of local Hamiltonians is defined to be the set $\{ H_C : C \in L_{yes}\}$.

\paragraph*{The lower bound.} We now prove the claimed circuit lower bound. For each $C\in L_{yes}$ let $\rho_C$ be an $\eps$-noisy ground state for $H_C$ for $\eps \leq \theta / 2\beta$. Assume for contradiction that the family $\{ \rho_C \}$ has $\delta$-approximate polynomial circuit complexity; i.e. there exists a polynomial $p(m)$ such that for all $C$, the $\delta$-approximate circuit complexity of $\rho_C$ is at most $p(m)$. We will show that this implies $\QCMA = \QMA$, contradicting our assumption. For the sake of clarity we will write the proof for the case of \emph{exact} circuit complexity (i.e. $\delta = 0$). Generalizing the argument to larger $\delta$ is straightforward.

\newcommand{\sA}{{\sf A}}
\newcommand{\sW}{{\sf W}}

We define the following $\QCMA$ verifier $V$:

\begin{center}
\begin{mdframed}
\textbf{$\QCMA$ verifier $V$} \\
Input: $\QMA$ verifier circuit $C \in L_{yes} \cup L_{no}$, witness string $w \in \{0,1\}^{\ell}$ with $\ell = \poly(n)$. 
\begin{enumerate}
	\item Interpret $w$ as the description of a quantum circuit $W$ acting on $m' \geq m$ qubits.
	\item Generate the $m'$-qubit state $\ket{\varphi}$ formed by running the circuit $W$ on $\ket*{0^{m'}}$. 
	
	\item Let $\rho$ denote the $m$-qubit mixed state obtained by tracing out all but the first $m$ qubits of $\ket{\varphi}$. The $m$ qubits of $\rho$ are divided into an $n$-qubit $\state$ register  and an $(m-n)$-qubit $\time$ register.	
	\item Let $\rho_1$ denote the reduced density matrix of $\rho$ on the $\state$ register.
	\item Let $\rho_2 = D (\rho_1 \otimes \ketbra{0^s}{0^s}_{\sA}) D^\dagger$, where $s$ is the number of ancilla bits used by the decoding procedure $D$, and $\sA$ denotes the label of the ancilla register. Note that $\rho_2$ is a $(n+s)$-qubit state. 
	
	\item Let $\rho_3$ denote the marginal of $\rho_2$ on the first $k$ qubits. %
	\item Measure the first qubit of $C \rho_3 C^\dagger$ in the standard basis, and accept if the outcome is $\ket{1}$. Otherwise, reject. 

\end{enumerate}
\end{mdframed}

\end{center}

\medskip
\noindent We now show that the circuit $V$ decides the language $L$.

\paragraph*{\textbf{Soundness.}} Suppose $C \in L_{no}$. Then for all witnesses $w$, $V$ accepts $(C,w)$ only if measuring the first qubit of $C \rho_4 C^\dagger$ yields outcome $\ket{1}$. However, by definition for all witness states $\ket{\xi}$, $C$ accepts $\ket{\xi}\ket{0}$ with probability at most $\gamma$. Therefore $V$ accepts with probability at most $\gamma$.

\paragraph*{\textbf{Completeness.}} Fix $C \in L_{yes}$. Let $H$ denote the $c$-dimensional Feynman-Kitaev clock Hamiltonian, acting on $m$ qubits, corresponding to $C$. Since $C$ is a {\em yes} instance we have $\lambda_{\min}(H) \leq \gamma$. By assumption, there exists an $\eps$-noisy ground state $\rho$ of $H$ such that $\rho$ has polynomial-size circuit complexity. In other words, there exists a circuit $W$ of size $m' = \poly(m)$ such that $\rho$ is the marginal state of $W \ket*{0^{m'}}$ on the first $m$ qubits. 

We argue that the \emph{classical} description $w$ of the circuit $W$ serves as a witness that $C$ is a \emph{yes} instance. The state $\rho$ created by the verifier in Step $3$ is precisely the $\eps$-noisy ground state. Again for simplicity we present the argument for the case that $\rho$ is an $\eps$-error state of $H$; the argument extends to noisy ground states via linearity.

Before going through the argument in detail, we first describe the high level strategy. The ground space of the Hamiltonian $H$ is spanned by history states of the circuit $C'$. Since $\rho$ is a low-error state, it agrees with a ground state $\sigma$ of $H$ in a $1 - \eps$ fraction of qubits. For sake of exposition, assume that this ground state is a history state of an accepting execution of $C'$ (in general, $\sigma$ could be a convex combination of different history states). The history state is a superposition over snapshots of $C'$, and therefore by construction half of these snapshots belong to the ``idling'' period of the circuit; the $\sR_1 \sR_2$ registers of these snapshots will be a witness state $\ket{\xi}$ encoded with the code $Q$. This implies, informally speaking, that half of the mass of $\rho$ looks like a codeword of $Q$ that has been corrupted in at most $\eps m$ qubits. Since $\eps m$ is smaller than the distance of the code $Q$, this codeword is recoverable by applying the decoding procedure $D$, and thus the verifier $V$ can obtain the original, unencoded witness state $\ket{\xi}$ (with probability half), which is then used as input for the circuit $C$. Since the history state described an accepting execution of $C'$, this implies that $C$ also accepts $\ket{\xi}$, and thus $V$ accepts with probability at least $1/2$.

We now proceed with the argument in detail. Since $\rho$ is an $\eps$-error state of $H$, there exists a density matrix $\pi$ such that $\Tr(H \pi) \leq \gamma$ and $\Tr_S(\rho) = \Tr_S(\pi)$ for some subset $S \subseteq [m]$ of size at most $\eps m$. Let $\pi = \sum_i p_i \ketbra{\Omega_i}{\Omega_i}$. By Markov's inequality, with probability at least $\alpha_{good} \geq 1 - \sqrt{\gamma}$ over $p_{i}$ we have $\bra{\Omega_i} H \ket{\Omega_i} \leq \sqrt{\gamma}$. Call such $i$'s \emph{good}, and \emph{bad} otherwise. By Theorem~\ref{theorem-closegroundstates}, we have that for a good $i$ there exists a history state $\ket{\Psi_i} = \frac{1}{\sqrt{T+1}} \sum_t \ket{\clock_{c}(t)}_\time \otimes \ket{\psi_{i,t}}_\state$ for the circuit $C'$ such that
\begin{equation}
	\| \ketbra{\Omega_i}{\Omega_i} - \ketbra{\Psi_i}{\Psi_i} \|_1 \leq \poly(m) \sqrt{\gamma}.
\end{equation}
Write $\pi = \pi_{good} + \pi_{bad}$ where $\pi_{bad} = \sum_{\text{$i$ bad}} p_i \ketbra{\Omega_i}{\Omega_i}$ and
\begin{align}
	\pi_{good} &= \sum_{\text{$i$ good}} p_i \ketbra{\Omega_i}{\Omega_i} \\
			  &= \sum_{\text{$i$ good}} p_i \ketbra{\Psi_i}{\Psi_i} - \left(\ketbra{\Psi_i}{\Psi_i} - \ketbra{\Omega_i}{\Omega_i} \right) \\
			  &= \left ( \sum_{\text{$i$ good}} p_i \ketbra{\Psi_i}{\Psi_i} \right) + \mu.
\end{align}
where $\| \mu \|_1 \leq \poly(m) \sqrt{\gamma}$. Define $\pi_{err} = \pi_{bad} + \mu$. Observe that $\| \pi_{err} \|_1 \leq \| \mu \|_1 + \| \pi_{bad} \|_1 \leq \poly(m) \sqrt{\gamma}$. Thus,
\begin{equation}
	\pi = \sum_{\text{$i$ good}} p_i \ketbra{\Psi_i}{\Psi_i} + \pi_{err}
\end{equation}
In the verifier $V$, the state $\rho_1$ is defined to be $\Tr_\time(\rho)$. Let $S' = S\setminus \time$. Since $\Tr_S(\rho) = \Tr_S(\pi)$, we have
\begin{align}
	\Tr_{S'} (\rho_1) &= \Tr_{S' \cup \time}(\rho) \\
	 &= \Tr_{S \cup \time}(\pi) \\
	&= \sum_{\text{$i$ good}} p_i \Tr_{S \cup \time}(\ketbra{\Psi_i}{\Psi_i}) + \Tr_{S \cup \time}(\pi_{err}) \\
	&= \sum_{\text{$i$ good}} \frac{p_i}{T+1} \sum_t \Tr_{S'} \left(\ketbra{\psi_{i,t}}{\psi_{i,t}} \right) + \Tr_{S \cup \time}(\pi_{err}).
	\label{eq:rho_1}
\end{align}
Call $t$ \emph{idling} if $T_{enc} < t \leq T_{dec}$, and \emph{active} otherwise. By construction of circuit $C'$, for idling $t$ and good $i$ we have
\begin{equation}
	\ket{\psi_{i,t}} = E \ket{\xi_i}\ket*{0^{n-k}}.
\end{equation}
for some $k$-qubit witness state $\ket{\xi_i}$. Thus it would seem like $\rho_1$ has roughly half probability mass on codewords of $Q$, except we have only established this \emph{after} we trace out the qubits in $S'$. We need to argue that we can decode these codewords by applying the decoding procedure $D$ to $\rho_1$.

\begin{lemma}
\label{lem:purification_decoding}
	Let $S \subseteq [n]$ be a subset of size at most $\beta n/2$. Let $\phi$ be a density matrix on an $n$ qubit register $\state$ such that $\Tr_S(\phi) = q \sum_i p_i \Tr_S(\ketbra{\psi_i}{\psi_i}) + (1 - q) \vartheta$, where $0 \leq q \leq 1$, $\{p_i\}$ is a probability distribution and for all $i$, $\ket{\psi_i} = E \left ( \ket{\xi_i} \ket{0^{n-k}} \right)$ is a codeword of the code $Q$. Let $\Delta = \Tr_{\sA} ( D \phi \otimes \ketbra{0^s}{0^s}_{\sA} D^\dagger)$ be the result of applying the decoding procedure to $\phi$ (along with some ancilla), followed by tracing out the ancilla register. We have then that
	\begin{equation}
		F \left ( \Delta, \Gamma \right ) \geq q.
	\end{equation}
	where $F(\cdot,\cdot)$ is the fidelity function and $\Gamma$ is the $n$-qubit state defined as
	\begin{equation}
		\Gamma = \sum_i p_i \ketbra{\xi_i}{\xi_i} \otimes \ketbra*{0^{n-k}}{0^{n-k}}.
	\end{equation}
\end{lemma}
\begin{proof}
Let $\sigma$ denote $\sum_i p_i \Tr_S(\ketbra{\psi_i}{\psi_i})$, and consider the following purification $\ket{\sigma}$ of $\sigma$:
\begin{align}
	\ket{\sigma} &= \sum_i \sqrt{p_i} \ket{\psi_i}_{\state} \ket{i}_{\sR} 
\end{align}
Let $\ket{\phi}$ be an arbitrary purification of $\phi$ on registers $\state$ and $\sR$. By Uhlmann's Theorem, there exists a unitary $J$ acting on registers $\sR$, and the qubits of $\state$ indexed by $S$ such that the fidelity between $J \ket{\sigma}$ and $\ket{\phi}$ is equal to the fidelity between $\sigma$ and $\Tr_{S} (\phi)$, which is at least $q$. 

Next we have
\begin{align}	
	D J \ket{\sigma}_{\state\, \sR} \ket{0^s}_{\sA}  = \sum_i  \sqrt{p_i} \,  DJ  \ket{\psi_i}_{\state} \ket{i}_{\sR} \ket{0^s}_{\sA}
\end{align}
We can interpret the state $J \ket{\psi_i}_\state \ket{i}_{\sR} $ as at most $\beta n/2$ qubits of $\ket{\psi_i}$ getting entangled with the environment register $\sR$. Since the distance of $Q$ is $\beta n$, this implies that the decoding procedure $D$ can correct for this error:
\begin{equation}
DJ \ket{\psi_i}_{\state}  \ket{i}_{\sR} \ket{0^s}_{\sA} = \ket{\xi_i} \ket{0^{n-k}} \ket{\tau_i}_{\sR \sA}
\end{equation}
where $\ket{\tau_i}$ is a junk state. It is easy to see that the junk states $\ket{\tau_i}$ depend only on $i$, and are independent of $\ket{\psi_i}$; otherwise the decoding procedure would fail on superpositions of codewords. In particular, this implies that the $\{ \ket{\tau_i} \}$ are orthogonal. 

Thus we have
\begin{align}
	\Tr_{\sR \sA} \left ( DJ \ketbra{\sigma}{\sigma} \otimes \ketbra{0^s}{0^s} J^\dagger D^\dagger \right) = \sum_i p_i \ketbra{\xi_i}{\xi_i} \otimes \ketbra*{0^{n-k}}{0^{n-k}} = \Gamma.
\end{align}
Putting everything together, we get
\begin{align}
	F \left (\Delta, \Gamma \right ) &\geq F \left (D \phi \otimes \ketbra{0^s}{0^s} D^\dagger, DJ \ketbra{\sigma}{\sigma} \otimes \ketbra{0^s}{0^s} J^\dagger D^\dagger \right ) \\
	&=  F \left (\phi, J \ketbra{\sigma}{\sigma} J^\dagger \right ) \\
	&\geq q
\end{align}
where the first inequality follows from the fact that the fidelity function is non-increasing under quantum operations, the equality is because the fidelity function is unitarily invariant, and the last inequality follows from our lower bound on the fidelity between $J\ket{\sigma}$ and $\ket{\phi}$.
\end{proof}

Call a pair $(i,t)$ \emph{nice} if both $i$ is good and $t$ is idling. To apply Lemma~\ref{lem:purification_decoding}, we let $q = \sum_{\text{$(i,t)$ nice}} \frac{p_i}{T+1}$ denote the probability of a nice pair $(i,t)$, and observe that $\Tr_{S'} (\rho_1) = q \sum_{\text{$(i,t)$ nice}} p_{i,t}' \ketbra{\psi_{i,t}}{\psi_{i,t}} + (1 - q) \vartheta + \Tr_{S \cup \time}(\pi_{err})$, where $p_{i,t}' = \frac{p_i}{q(T+1)}$. Furthermore, we have that $|S'| \leq |S| \leq \eps m \leq \beta n/2$. Then, Lemma ~\ref{lem:purification_decoding} implies that the state $\rho_3$ obtained by the $\QCMA$ verifier $V$ satisfies
\begin{align}
	F( \rho_3, \sum_{\text{$(i,t)$ nice}} p_{i,t}' \ketbra{\xi_i}{\xi_i}) \geq q - \poly(m) \sqrt{\gamma}
\end{align}
where the error term $\poly(m) \sqrt{\gamma}$ comes from the trace norm of $\Tr_{S \cup \time}(\pi_{err})$. 
Furthermore, $\ket{\Psi_i}$ satisfies 
\begin{equation}
	\bra{\Psi_i} H \ket{\Psi_i} \leq \poly(m) \sqrt{\gamma}
\end{equation}
because it is close to a low-energy state $\ket{\Omega_i}$. This implies that for good $i$ the circuit $C'$ (and therefore the circuit $C$) accepts $\ket{\xi_i,0}$ with probability at least $1 - \poly(m) \sqrt{\gamma}$. Thus the probability that $C$ accepts $\sum_{\text{$(i,t)$ nice}} p_{i,t}' \ketbra{\xi_i}{\xi_i}$ is at least $1 - \poly(m) \sqrt{\gamma}$. 

This implies that the probability that $C$ accepts the state $\rho_3$ is at least $\Omega(q^2)$. Since $i$ is good with probability $1 - \sqrt{\gamma}$ and $t$ is idling with probability $1/2$, this implies that $q \geq 1/2 - o(1)$, so the probability that $V$ accepts input $(C,w)$ is at least $1/4 - o(1)$, which is bounded away from $\gamma$. This implies that $V$ correctly decides $L$ and therefore $\QCMA = \QMA$, contradicting our assumption.

\end{proof}

\subsection{Unconditional constructions of SNGS Hamiltonians from oracle separations}
\label{unconditionalsuperpoly}

In the previous section, we showed the existence of a family of SNGS Hamiltonians, assuming that $\QCMA \neq \QMA$. In this section we \emph{unconditionally} show the existence of a family of SNGS Hamiltonians, with the tradeoffs that (1) the locality of the Hamiltonians is superlogarithmic, and (2) the Hamiltonians are not fully explicit: one of the terms of the Hamiltonian is chosen via the probabilistic method.

We accomplish this by leveraging the oracle separation results of~\cite{aaronson2007quantum,DBLP:journals/corr/FeffermanK15} that show $\QCMA^{\Oo} \subsetneq \QMA^{\Oo}$ relative to an oracle $\Oo$. The oracle given by~\cite{aaronson2007quantum} is a unitary oracle that reflects around some state; i.e. it is of the form $\II - 2\ketbra{\psi}{\psi}$ for some state $\ket{\psi}$. The oracle given by~\cite{DBLP:journals/corr/FeffermanK15} is slightly more structured: $\Oo$ is a permutation matrix in the standard basis. In this section we will use the permutation oracle of~\cite{DBLP:journals/corr/FeffermanK15}.

We will be more precise about what we mean by $\QMA^\Oo$ and $\QCMA^\Oo$. The oracle $\Oo$ is actually a countably infinite sequence of unitary operators $\{ \Oo_1,\Oo_2,\ldots \}$ where each $\Oo_m$ acts on $m$ qubits. We say that a language $L$ is in $\QMA^\Oo$ if there exists a uniform family of $\poly(n)$-size $\QMA^\Oo$-verifier circuits $\{ V_n \}$ that decides $L$. A $\QMA^\Oo$-verifier circuit $V_n$ is the same as a $\QMA$-verifier circuit except $V_n$ can use the unitary $\Oo_{m(n)}$ as an elementary gate (the ``size'' of the oracle that $V_n$ is allowed to query is determined by some function $m(n)$ that grows at most polynomially in $n$).  The complexity class $\QCMA^\Oo$ has an analogous definition.

\begin{theorem}[\cite{DBLP:journals/corr/FeffermanK15}]
\label{theorem-fk}
There exists a family of oracles $\{ \Oo_n \}$ and a unary language $L = (L_{yes},L_{no})$ such that 
	\begin{enumerate}
		\item $L \in \QMA^\Oo$ but $L \notin \QCMA^\Oo$.
		\item The function $m(n) = O(\log^{1 + \kappa} n)$ for arbitrarily small $\kappa > 0$.
		\item There exists a uniformly generated family of $\QMA^\Oo$-verifier circuits $\{ V_n \}$ such that $V_n$ makes exactly one call to $\Oo_{m(n)}$. 		
		\item The oracles $\Oo_m$ are permutation matrices in the standard basis.
	\end{enumerate} 
\end{theorem}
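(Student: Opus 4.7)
The plan is to adapt the Aaronson--Kuperberg oracle separation between $\QMA$ and $\QCMA$ from the reflection-oracle setting to the more restrictive setting of permutation oracles, while also squeezing the oracle size down to $m(n) = O(\log^{1+\kappa} n)$ and the query complexity of the $\QMA$ verifier down to a single call. The promise language will be a ``hidden subset'' problem on permutations: fix $m = m(n)$ and $K = 2^{m/2}$, and let $\Oo_m$ be a permutation of $\{0,1\}^m$. YES instances are permutations for which there exists a subset $S \subseteq \{0,1\}^m$ of size $K$ whose image $\Oo_m(S)$ equals some fixed canonical $K$-element set $S_0$; NO instances are permutations in which no $K$-element subset is mapped close to $S_0$. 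Since every YES instance has exactly one such distinguished subset (up to the irrelevant permutation of $S$) and since there are $\binom{2^m}{K}$ possible hidden subsets, the family of YES oracles is doubly exponentially large in $m$.

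For the $\QMA^\Oo$ upper bound, the verifier receives the uniform superposition $\ket{\psi_S} = K^{-1/2} \sum_{x \in S} \ket{x}$ as its quantum witness, applies the oracle $\Oo_m$ exactly once to obtain $K^{-1/2} \sum_{y \in \Oo_m(S)} \ket{y}$, and then projects onto the canonical state $\ket{\psi_{S_0}}$ using a SWAP test against a locally prepared copy of $\ket{\psi_{S_0}}$ (which requires no additional oracle calls). In the YES case the projection succeeds with probability $1$, while in the NO case no witness state can be mapped close to $\ket{\psi_{S_0}}$ by a single application of $\Oo_m$, so the acceptance probability is small. This places the language in $\QMA^\Oo$ with a single-query verifier and also makes $\Oo_m$ a permutation matrix in the standard basis, as required.

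The $\QCMA^\Oo$ lower bound is where the real work lies. For any polynomial-length classical witness $w$ and any quantum verifier making $T = \poly(n)$ queries to $\Oo_m$, I would use an in-place quantum adversary / hybrid argument tailored to permutation oracles (since the polynomial method is not directly available for in-place oracles): fixing $w$, show that distinguishing a random YES instance from a random NO instance requires $\Omega(\sqrt{K}/\mathrm{polylog}\,K) = \Omega(2^{m/4}/\mathrm{polylog})$ queries, because the hidden subset $S$ is information-theoretically uniform among $\binom{2^m}{K}$ possibilities and a polynomial-query quantum algorithm can only rule out a polynomial number of candidate subsets. A union bound over the at most $2^{\poly(n)}$ possible classical witnesses loses only an exponential factor in $\poly(n)$, which is still dominated by $2^{m/4}$ once $m(n) = \omega(\log n)$; taking $m(n) = O(\log^{1+\kappa} n)$ gives $2^{m/4} = n^{\omega(1)}$ and completes the separation with room to spare.

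The main obstacle is the in-place permutation adversary bound itself. Standard $\QMA$ vs $\QCMA$ oracle separations (Aaronson--Kuperberg) use reflection oracles $\II - 2\ketbra{\psi}{\psi}$, which are well-suited to the polynomial method; translating the lower bound to honest permutation matrices rules out such methods and demands a new adversary-style argument that respects the rigidity of permutations (each $\Oo$ is a $0/1$ matrix with exactly one $1$ per row and column). Getting this bound strong enough to survive the union bound over all $2^{\poly(n)}$ classical witnesses is precisely what forces the slightly super-logarithmic locality $m(n) = O(\log^{1+\kappa} n)$; obtaining the optimal $m(n) = O(\log n)$ would require a significantly tighter adversary bound and remains open.
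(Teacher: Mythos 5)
Your construction has a fatal flaw before the lower bound even comes into play: the promise language you define is vacuous. For \emph{every} permutation $\Oo_m$ the set $S = \Oo_m^{-1}(S_0)$ has size $K$ and satisfies $\Oo_m(S) = S_0$ exactly, so every permutation is a YES instance and the NO side (``no $K$-element subset is mapped close to $S_0$'') is empty. Fefferman and Kimmel avoid precisely this trap: in their language the hidden preimage set always exists --- it is $\sigma_m^{-1}(\{0^{m/2}\}\times\{0,1\}^{m/2})$ --- and the YES/NO distinction is a \emph{property} of that set (whether at least $2/3$ or at most $1/3$ of its elements have least significant bit $0$). The quantum witness is the uniform superposition over the preimage; the single oracle call is used only to verify that the witness really is that preimage state (by checking its image lands on strings with $m/2$ leading zeros), and the answer bit is then read off by directly measuring the least significant bit of the witness, with the verifier randomly choosing which of the two tests to perform. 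Your SWAP-test verifier against a canonical $\ket{\psi_{S_0}}$ cannot be repaired within your language definition, because there is nothing to decide.

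The second gap is that you assume away the actual content of the theorem: the $\QCMA^\Oo$ lower bound. The assertion that ``a polynomial-query quantum algorithm can only rule out a polynomial number of candidate subsets'' is not a valid query-complexity argument, and you explicitly defer the ``in-place permutation adversary bound'' that would make it one. In the cited proof this is where the work lies: Fefferman and Kimmel use the probabilistic method, showing that for a suitable distribution over permutation oracles, any verifier circuit --- \emph{for each fixed} polynomial-length classical witness, i.e.\ against the oracle distribution conditioned on consistency with that witness --- needs $2^{\delta m}/\poly(n)$ queries, after which the union bound over the $2^{\poly(n)}$ witnesses goes through because $m(n) = \omega(\log n)$. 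You also omit a modeling restriction that the hiding argument needs: verifiers are given the permutation oracle only in the forward (in-place) direction, with no access to $\Oo^\dagger$; with inverse access the preimage structure would be easy to probe. Your high-level intuition (hidden-structure problem, single-query $\QMA$ verifier, superlogarithmic $m(n)$ to beat the witness union bound) does track the cited result, but both the language and the lower bound would have to be replaced by the Fefferman--Kimmel versions for the proof to stand.
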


At a high level the oracle separations of~\cite{aaronson2007quantum,DBLP:journals/corr/FeffermanK15} are proved in the following manner. For concreteness we discuss the separation proved by~\cite{DBLP:journals/corr/FeffermanK15}. Fix an $n$, and let $m = m(n)$ denote the number of qubits the permutation oracle $\Oo_m$ acts on. The oracle $\Oo_m$ corresponds to a permutation $\sigma_m$ on $\{0,1\}^m$. The language $L$ corresponds to the following problem: given $n$, decide whether $\sigma_m^{-1}( \{0^{m/2}\} \times \{0,1\}^{m/2})$ (i.e. the preimage of strings that start with $m/2$ leading $0$'s) has at least $2/3$ fraction of strings with the least significant bit being $0$ (\emph{yes} case), or at most $1/3$ fraction with the least significant bit being $0$ (\emph{no} case), promised that one is the case\footnote{An important part of the oracle model of~\cite{DBLP:journals/corr/FeffermanK15} is that the verifiers are \emph{not} given access to the inverse oracle $\Oo^\dagger$.}. This problem can be solved easily in $\QMA^\Oo$: Given a witness state $\ket{\xi}$, the $\QMA^\Oo$ verifier performs one of two tests with half probability each: either (1) apply the oracle $\Oo$ on $\ket{\xi}$ and measure to check that the resulting state is the equal superposition over strings of the form $0^{m/2}x$ for $x \in \{0,1\}^{m/2}$; or (2) measure the least significant bit of $\xi$ and check whether it is $0$.

In the \emph{yes} case, the prover can provide the witness state that is the equal superposition
\begin{equation}
	\frac{1}{2^{m/4}} \sum_{x \in \sigma_m^{-1}(\{0^{m/2}\} \times \{0,1\}^{m/2})} \ket{x}.
	\label{eq-subset-state}
\end{equation}
This will convince the verifier with probability $2/3$. In the \emph{no} case, either the witness state is not a correct preimage state, or if it is, the least significant bit will not be $0$ with high probability. Either way, the verifier will reject with good probability.

The more difficult part is to construct $L$ and $\Oo$ so that $L \notin \QCMA^\Oo$. Fefferman and Kimmel use the probabilistic method to show the existence of $L$ and $\Oo$ such that any verifier circuit $V$ --- even when given a witness string --- requires at least $2^{\delta m}/\poly(n)$ queries to $\Oo$ to decide $L$, for some constant $\delta > 0$. This is superpolynomial in $n$ when $m = \omega(\log n)$, so thus no $\QCMA^\Oo$ verifier can decide $L$.

In the next theorem, we construct a family of Hamiltonians whose ground states essentially encode states of the form in~\eqref{eq-subset-state}, by applying the proof technique of Theorem~\ref{theorem-superpoly} to the oracle verifier for $L$. This will require embedding the oracle $\Oo$ in the Hamiltonian, which is what worsens the locality from constant to slightly superlogarithmic. On the other hand, the Hamiltonian family is now unconditionally SNGS.

\superpolyoracle* %

\begin{proof}
The proof follows identically to that of Theorem~\ref{theorem-superpoly}. Let $L$ be the unary language from Theorem~\ref{theorem-fk}. Call the verifier circuit $C_n$ a \emph{yes} instance if $1^n \in L$, and otherwise call it a \emph{no} instance. 

For every \emph{yes} instance $C_n$, apply the circuit transformation $C_n \mapsto C_n'$ as in Theorem~\ref{theorem-superpoly}, and let $H^{(n)}$ be the Feynman-Kitaev Hamiltonian corresponding to $C_n'$. Everything is the same as before, except now $C_n$ has one call to the oracle $\Oo_{m(n)}$, which is not two-local. This means that the term in $H_\text{prop}$ that corresponds to this oracle call will have locality $m(n)$. 

We will let our family of Hamiltonians be $\{ H^{(n)} \}$. If there was a family of noisy ground states $\{ \rho_n \}$ for $\{ H^{(n)} \}$ with polynomial circuit complexity, then we would actually have a $\QCMA^\Oo$ verifier for $L$, contradicting Theorem~\ref{theorem-fk}.

\end{proof}

A couple of remarks are in order:
\begin{enumerate}
	\item This family of local Hamiltonians is almost fully explicit except for a single term in each $H^{(n)}$ that corresponds to the oracle call to $\Oo_{m(n)}$ at some time $t$. Here, the term looks like
\begin{equation}
H_t = \half \left ( - A_{t,t-1} \otimes \Oo_{m(n)}- A_{t-1,t} \otimes \Oo_{m(n)}^\dagger + A_{t,t}\otimes \II + A_{t-1,t-1} \otimes \II \right)
\end{equation}
like any other term in $H_\text{prop}$. The oracle $\Oo$ is chosen randomly; hence the name ``semi-probabilistic construction.''

	\item It is not too hard to argue that \emph{ground states} of $\{ H^{(n)}\}$ require superpolynomial circuit complexity: there are roughly $2^{2^{\Omega(m(n))}}$ different choices of $\Oo_{m(n)}$, but only $2^{\poly(n)}$ states with polynomial circuit complexity. If $m(n)$ is superlogarithmic, then by a counting argument there exists a choice of $\Oo_{m(n)}$ such that the corresponding Hamiltonian $H^{(n)}$ has a ground state that cannot be described using polynomial-size circuits. 
	
	Theorem~\ref{theorem-superpoly-oracle} goes further than this statement; it says that all \emph{noisy ground} states (which includes all low-error states) of $\{ H^{(n)} \}$ require superpolynomial circuit complexity. Here, the straightforward counting argument breaks down. This is because there are potentially at least $2^{2^{\eps n}}$ $\eps$-error states for $H^{(n)}$: if $\Tr_S(\ketbra{\Phi}{\Phi}) = \Tr_S(\ketbra{\Psi}{\Psi})$ for some ground state $\ket{\Psi}$, then $\ket{\Phi} = U \otimes \II \ket{\Psi}$ for some $\eps n$-local unitary $U$, and there are roughly $2^{2^{\eps n}}$ such unitaries. This a much greater number than $2^{2^{m(n)}}$ for sublinear $m(n)$.
\end{enumerate} %

\section{Asymptotically good approximate qLWC codes}
\label{sec-good-approx-qldpc-codes}

In this section, we show how to obtain a family of approximate qLWC codes with constant relative distance, constant rate, constant locality, and polynomially small error. This construction is a distillation of the technique used to prove Theorem~\ref{theorem-superpoly}.

\goodapproxqldpc* %

The proof of this is similar to that of Theorem \ref{theorem-superpoly}. Let $V$ be the encoding circuit of an error-correcting code of good rate. Let $C$ be a circuit consisting of running $V$ followed by applying identity gates polynomially many times. We claim then that the Feynman-Kitaev Hamiltonian $H$ corresponding to the circuit is an approximate qLWC code. By tracing out the $\time$ register of any state $\mathcal{E} \circ \Enc(\rho)$ for $\rho \in D((\CC^q)^{\otimes k})$ and then applying error correction, we can generate a state close to $\rho$ in trace distance. \\

\begin{proof}
Let $\{ Q_k \}$ be an explicit family of quantum error-correcting codes such that each $Q_k$ is an $[[n,k,d]]_q$ code with $q=O(1)$, $n = O(k)$ and $d \geq \theta n$. Such asymptotically good codes are known for sufficiently small $\theta$; furthermore, they can be assumed to be CSS codes (see, e.g.,~\cite{ashikhmin2001asymptotically}). 

Fix a $k$ and the code $Q = Q_k$. Fix $\delta = \delta(n)$. Let $V$ be the unitary encoding circuit for $Q$: for an $k$-qudit message $\ket{\psi}$, it maps $\ket{\psi} \ket{0}$ to an $n$ qudit codeword. Since $Q$ is an explicit CSS code, $V$ can be computed by a circuit of size $T_V \leq n (n - k) \leq n^2$ (see, e.g., ~\cite{gottesman_1997}).

Consider a circuit $C$ consisting of running the encoding circuit $V$ followed by $K = 4T_V/\delta^2$ identity gates so that $K$ satisfies
\begin{equation}
\frac{K}{T_V + K} \geq 1 - \delta^2/4.
\end{equation}
Therefore, the ``waiting period'' is at least $1 - \delta^2/4$ fraction of the total circuit running time.

\begin{figure}[h!]
\centering
$
\Qcircuit @C=1em @R=0em {
& \multigate{5}{V} & \qw & \multigate{5}{\qquad \II \qquad} & \qw\\
& \ghost{V} & \qw & \ghost{\qquad \II \qquad} & \qw\\
& \ghost{V} & \qw & \ghost{\qquad \II \qquad} & \qw\\
& \ghost{V} & \qw & \ghost{\qquad \II \qquad} & \qw\\
& \ghost{V} & \qw & \ghost{\qquad \II \qquad} & \qw\\
& \ghost{V} & \qw & \ghost{\qquad \II \qquad} & \qw\\
}
$
\end{figure}

The circuit size $T_C$ of $C$ is equal to $T_V + K = (1 + 4/\delta^2)T_V$. Define
\begin{equation}
	r = \left \lceil \frac{\log T_C}{\log n} \right \rceil.
\end{equation}
For example, if $\delta^2 = n^{-a}$, then $T_C = O(n^{2 + a})$, so $r = 2 + a$. Let $H = H_\text{in} + H_\text{prop} + H_\text{stab}$ be the Feynman-Kitaev Hamiltonian of $C$ with a $r$-dimensional clock, except we omit the $H_\text{out}$ term and the term $H_{in}$ enforces that the input to the circuit has the last $n - k$ qudits set to $\ket{0}$. %
The terms $H_\text{prop}$ and $H_\text{stab}$ are the same as in Section~\ref{sec-clock-construction}.  We have that $H$ has locality $3 + 2r$ and acts on $m = r \lceil T_C^{1/r} \rceil + n \leq (r+1)n$ qudits.

It is easy to verify that the ground energy of $H$ is $0$ and the ground space is spanned by history states of the circuit $C$ where the initial state is of the form $\ket{\xi} \otimes \ket{0}^{\otimes (n-k)}$ for some $k$-qudit state $\ket{\xi}$.

We will let $H$ be the code Hamiltonian of our approximate qLWC code.

\paragraph*{Encoding map.} The encoding map $\Enc$ which acts on $k$-qudit density matrices $\rho$ acts as follows:
\begin{equation}
\Enc(\rho) = W(\rho \otimes \ketbra{0}{0}^{\otimes (n-k)}) W^\dagger.
\end{equation}
where $W$ is the following $n$-qudit isometry:
\begin{equation}
\ket{\psi}_\state \overset{W}{\mapsto} \frac{1}{\sqrt{T_C+1}} \sum_{t = 0}^{T_C} \ket{\clock_r(t)}_\time \otimes C_t \cdots C_1 \ket{\psi}_\state.
\end{equation}

The isometry $W$ has an efficient circuit: first, it initializes the $\time$ register into an equal superposition of clock states
\begin{equation}
	\frac{1}{\sqrt{T_C+1}} \sum_{t = 0}^{T_C} \ket{\clock_r(t)}_\time.
\end{equation}
Then, conditioned on time $\ket{\clock_k(t)}$ in the $\time$ register, $W$ applies the unitary $C_t \cdots C_1$ to the $\state$ register.

Since $\Enc$ generates history states for the circuit $C$, we have that $\Tr(H \Enc(\rho)) =0$ for all $\rho$. Conversely, if $\ket{\Psi}$ is a state such that $\expval{H}{\Psi} = 0$, then $\ket{\Psi}$ is a history state of the circuit $C$ with initial state $\ket{\xi} \otimes \ket{0}$. Thus we have $\ketbra{\Psi}{\Psi} = \Enc(\ketbra{\xi}{\xi})$.

\paragraph*{Approximate decoding.} Let $\tilde V$ be the decoding circuit expressed as a unitary mapping an encoded state into two registers, $\state$ and $\ancilla$. Define the decoding map $\Dec(\cdot)$ as
\begin{equation}
\Dec(\sigma) \defeq \Tr_\ancilla(\tilde V \Tr_\time(\sigma) {\tilde V}^\dagger).
\end{equation}
Here $\ancilla$ are the last $n - k$ qudits of the $\state$ register. Since $\tilde V$ is a polynomial-size circuit (as previously discussed), the map $\Dec$ also has polynomial-size circuits.

Let $\ket{\phi} \in (\CC^q)^{\otimes k} \otimes \mathcal{R}$ be a $k$-qudit message $\rho$ that has been purified (i.e., $\rho = \Tr_{\mathcal{R}}(\ketbra{\phi}{\phi})$). Let a Schmidt decomposition of $\ket{\phi}$ be $\sum_i \sqrt{p_i} \ket{\xi_i} \ket{i}$, where the $\{ \ket{\xi_i} \}$ correspond to the Hilbert space $(\CC^q)^{\otimes k}$ and the $\{ \ket{i} \}$ are orthonormal vectors in $\mathcal{R}$. Let $\ketbra{\Psi}{\Psi} = \Enc(\ketbra{\phi}{\phi})$, so that $\ket{\Psi} = \sum_i \sqrt{p_i} \ket{\Psi_i} \ket{i}$ where 
$\ket{\Psi_i} = \frac{1}{\sqrt{T_C + 1}} \sum_t \ket{\clock_r(t)} \otimes \ket{\psi_{i,t}}$ is the history state for circuit $C$ on input state $\ket{\xi_i}\otimes \ket{0}^{\otimes n-k}$. Recall for $t$ in the ``waiting period'' ($t > T_V$) we have $\ket{\psi_{i,t}} = \ket{\Gamma_i} \defeq V \ket{\xi_i}\ket{0}^{\otimes (n-k)}$. Thus we can write
\begin{equation}
	\ket{\Psi_i} = \frac{1}{\sqrt{T_C + 1}} \sum_{t = 0}^{T_V} \ket{\clock_r(t)} \otimes \ket{\psi_{i,t}} + \underbrace{\left (\frac{1}{\sqrt{T_C + 1}} \sum_{t = T_V + 1}^{T_C} \ket{\clock_r(t)} \right)}_{\defeq \ket{\tau}} \otimes \ket{\Gamma_i}.
\end{equation}
Define $\ket*{\wt{\Psi_i}} \defeq \sum_i \sqrt{p_i} \ket{\tau} \otimes \ket{\Gamma_i} \otimes \ket{i}$. Note that $\ket*{\wt{\Psi_i}}$ has norm equal to $\| \ket{\tau} \|^2 \geq 1 - \delta^2/4$ because waiting period is at least $1 - \delta^2/4$ of the total time. Furthermore, $| \ip*{\wt{\Psi}_i}{\Psi_i}|^2 = \| \ket{\tau} \|^2$. Using the relation between the trace distance and fidelity between two pure states, we have
\begin{equation}
	\left \| \ketbra{\Psi}{\Psi} - \ketbra*{\wt{\Psi}}{\wt{\Psi}} \right \|_1 \leq \delta.
\end{equation}
where $\ket*{\wt{\Psi}} = \sum_i \sqrt{p_i} \ket*{\wt{\Psi_i}} \ket{i}$.
Let $\mathcal{E}$ be a completely positive, trace preserving map acting on at most $(d - 1)/2$ qudits of $\sigma$. Since $Q$ is a code that can correct up to $(d-1)/2$ errors, and $\ket*{\wt{\Psi}}$ is a (sub-normalized) superposition of codewords of $Q$ (along with a time state $\ket{\tau}$ that gets traced out by $\Dec$), we have that $\Dec \circ \mathcal{E}(\ketbra*{\wt{\Psi}}{\wt{\Psi}}) = \ketbra{\phi}{\phi}$. Since the trace distance is non-increasing under quantum operations, we have that
\begin{align}
\left \| \Dec \circ \mathcal{E} \circ \Enc(\ketbra{\phi}{\phi})- \ketbra{\phi}{\phi} \right \|_1 
 &= \left \| \Dec \circ \mathcal{E}(\ketbra{\Psi}{\Psi}) - \Dec \circ \mathcal{E}(\ketbra*{\wt{\Psi}}{\wt{\Psi}}) \right \|_1 \\
 &\leq \left \| \ketbra{\Psi}{\Psi} - \ketbra*{\wt{\Psi}}{\wt{\Psi}} \right \|_1 \\
 &\leq \delta.
\end{align}

\paragraph*{Parameters.}
The parameters of our approximate qLWC code Hamiltonian $H$ are as follows: for message size $k$ and error $\delta$, we get
\begin{center}
\begin{tabular}{ r c c l }
 Qudit dimension &$q$& $=$ &$O(1)$,\\ 
 Locality & $w$ & $=$ & $3 + 2r$,\\
 Blocklength & $m$ & $\leq$ & $(r+1)n \leq O(rk)$, \\
 Distance & $d$ & $=$ & $\theta n \geq \frac{\theta}{(r+1)}m$   
\end{tabular}
\end{center}
where $r = \log (1 + 4/\delta^2)/\log n + 2$.
\end{proof}

\medskip

We conclude with a few remarks about our qLWC construction. The code Hamiltonian has the following properties:
\begin{enumerate}
	 \item It is a \emph{non-commuting} Hamiltonian --- whereas usually most code Hamiltonians, such as those coming from stabilizer codes, are commuting.
	 \item The terms of the Hamiltonian (i.e. the checks) do not behave as syndrome measurements as in stabilizer codes; they mainly describe groundspace, but are not directly used in the decoding procedure.
	 \item It is gapless: its gap decays as $\min \{ n^{-2}, \poly(\delta) \}$. 
	 \item The number of terms that act on any given particle (called the \emph{degree}) grows as a function of $n$. In contrast, qLDPC codes have bounded degree. 
	 \item The output state after applying an error and decoding is a mixed state with $\geq (1-\delta)$ probability of being the original state:
	 \begin{equation}
	 \Dec \circ \mathcal{E} \circ \Enc(\ketbra{\phi}{\phi}) = (1-\delta) \ketbra{\phi}{\phi} + \delta \rho_{\junk}.
	 \end{equation}
	 This is in contrast to the general definition of an approximate qLWC code where the output need only be close in trace distance.
\end{enumerate}

Furthermore, we note that our qLWC code construction is inherently quantum. The analagous classical construction of our code would use the Cook-Levin tableau \cite{Cook:1971:CTP:800157.805047,Levin:73} instead of the Feynman-Kitaev clock Hamiltonian. The rows of the resulting tableau would simply be copies of a codeword in a good error-correcting code, or equivalently, a repetition code applied to a good error-correcting code. This would not preserve the good rate and distance properties. Whereas, our qLWC code maintains the rows of the tableau in \emph{superposition}, which is essential for our construction to have good rate and distance parameters.

Lastly, we note that our qLWC construction can likely be improved by using more sophisticated clock Hamiltonian constructions; for example, the techniques of~\cite{caha2017feynman,1609.08571} may be useful for reducing the number of clock particles, reducing the locality of the Hamiltonian terms, and improving the spectral gap.

\bibliography{references}
\bibliographystyle{alpha}

\appendix

\section{The $k$-dimensional Feynman-Kitaev clock construction}
\label{appendix-kitaevclock}

Let $C$ be a circuit consisting of $T$ two-local acting on $n$ qubits. Suppose that $T \leq n^k$. The usual Feynman-Kitaev clock construction using a unary clock (e.g., as presented in Section~\ref{sec-clock-construction}) would produce a local Hamiltonian acting on $O(T)$ qubits. Here, we present an alternate construction where the Hamiltonian acts on $O(kn)$ qubits, at the cost of increasing the locality of the Hamiltonian by a constant. 

Let $d = \lceil T^{1/k}\rceil + 1 = O(n)$. The $\time$ register is comprised of $k$ ``subclocks'' denoted $\Rr_1, \Rr_2, \ldots, \Rr_k$ where each register $\Rr_i$ consists of $d$ qubits. We define
\begin{equation}
\ket{\clock_k(t)}_\time = \bigotimes_{i=1}^k \ket{\unary(a_i)}_{\Rr_i}
\end{equation}
where $a_1, \ldots, a_k \in \{0, \ldots, d-1\}$ is the unique solution to $t = \sum_{i=1}^{k} a_i d^{i-1}$. The $\state$ register consists of $n$ qubits. 

The modified Hamiltonian $H$ is the sum of four local terms $H_\text{in} + H_\text{prop} + H_\text{out} + H_\text{stab}$ as before, but we have to take into account this new encoding of the clock register. We give a brief overview of the changes needed to the canonical definition of these terms.

The Hamiltonian $H_\text{in}$ checks that when the $\time$ register is $\ket{\clock_k(0)}$, the $\state$ register is a witness state followed by ancillas. The Hamiltonian $H_\text{out}$ checks that when the $\time$ register is $\ket{\clock_k(T)}$, the $\state$ register's first qubit is $\ket{1}$; equivalently, that the circuit has accepted the computation. The Hamiltonian $H_\text{prop} = \sum_{t=1}^T H_t$ where $H_t$ verifies that the gate $C_t$ was properly applied. However, as described the Hamiltonians would need to encompass the entirety of the $\time$ register in addition to up to two qubits of the $\state$ register. We notice that the `majority' of the qubits in these Hamiltonian terms are dedicated to ensuring that the state is a clock state; these can be separately considered allowing the Hamiltonian to only look at the `pertinent' qubits of the time register. These extra enforcing Hamiltonian terms form the term $H_\text{stab}$. In particular, the stabilizer terms only need to enforce that each register $\Rr_i$ looks unary; meaning, that it looks like $\ket{0, \ldots, 0, 1, \ldots, 1}$. This is sufficient to restrict us to the space of only clock states. The Hamiltonians $H_\text{in},  H_\text{prop}, H_\text{out}, H_\text{stab}$ are defined as follows.

For $i \in \{1, \ldots, k\}$ and $j \in \{0, \ldots, d - 1\}$ let
\begin{equation}
\ket{v_{i,j}} \defeq 
	\begin{cases}
		\II \otimes \ket{0}_{\Rr_i(0)} & j = 0 \\
		\II \otimes \ket{0}_{\Rr_i(j+1)} \otimes \ket{1}_{\Rr_i(j)} & 0 < j < d - 1 \\
		\II \otimes \ket{1}_{\Rr_i(d-1)} & j = d - 1
	\end{cases}
\end{equation}
where we assume the identity acts on all unspecified qubits. Due to the inclusion of $H_\text{stab}$, $\ketbra{v_{i,j}}{v_{i,j}}$ is \emph{effectively} a projector onto the space of values where $\Rr_i$ has value $\ket{\unary(j)}$. For a given $t = \sum_i a_i d^{i - 1}$, define
\begin{equation}
A_t \defeq \ketbra{u_t}{u_t}, \qquad \ket{u_t} \defeq \bigotimes_{i = 1}^k \ket{v_{i,a_i}}.
\end{equation}
Then $A_t$ is \emph{effectively} a projector onto the space where the $\time$ register holds $\ket{\clock_k(t)}$. It is easy now to express the terms of the Hamiltonian.

\begin{enumerate}

\item
	\begin{equation}
		H_\text{in} = \sum_{j = m + 1}^n A_0 \cdot \Pi_{\state(j)}^{(1)}.
	\end{equation} 

\item
	\begin{equation}
		H_\text{prop} = \sum_{t = 0}^T H_t
	\end{equation}
	where for all $t$,
	\begin{equation}
		H_t = \half \left(- \ketbra{u_t}{u_{t-1}} \otimes C_t -\ketbra{u_{t-1}}{u_{t}} \otimes C_t^\dagger + (A_t + A_{t-1}) \otimes \II \right).
	\end{equation}

\item
	\begin{equation}
		H_\text{out} = A_T \otimes \Pi_{\state(1)}^{(0)}.
	\end{equation}

\item
	\begin{equation}
		H_\text{stab} = \sum_{i = 1}^k \sum_{j = 1}^{d-1} \Pi_{\Rr_i(j-1)}^{(0)} \Pi_{\Rr_i(j)}^{(1)}.
	\end{equation}
\end{enumerate}

The projector $\ketbra{v_{i,j}}{v_{i,j}}$ acts on at most 2 qubits. And $\ketbra{v_{i,j}}{v_{i,j-1}}$ act on 3 distinct qubits when both $0 < j - 1$ and $j < d - 1$ and acts on 2 distinct qubits otherwise. A simple ``carrying'' argument shows that $\ketbra{u_t}{u_{t-1}}$ acts on at most $2k + 1$ qubits. Clearly, $A_t$ acts on at most $2k$ qubits. Therefore, all Hamiltonian terms act on at most $2k + 3$ qubits ($2k + 1$ qubits from the $\time$ register and $2$ qubits from the $\state$ register). The total number of qubits acted upon by $H$ is $O(kn)$. 

\section{Proof of Proposition \ref{prop-lightcone}}
\label{appendix-prop-lightcone}

\lightconeprop* %

\begin{proof}
Let $K(A) = K_U(A)$ and $K(B) = K_U(B)$ be the lightcones of $A$ and $B$ with respect to $U$.
Let $L_1, L_2, \ldots, L_d$ denote the layers of gates of the circuit $U$. In other words $U = L_d L_{d-1} \cdots L_1$ where each $L_j$ is a tensor product of two-local gates. In what follows, we will abuse notation and interchangeably treat $L_j$, $K(A),K(B)$, $E(A),E(B)$ as both sets of gates as well as tensor products of two-local unitaries. Furthermore, for notationally simplicity, assume the notation $X_{[1..k]} = X_1 X_2 \ldots X_k$ and analagously $X_{[k..1]} = X_k X_{k-1} \ldots X_1$.

For each layer $j$, let $K_j = (K(A) \cup K(B)) \cap L_j$ (i.e. the intersection of the lightcones with layer $j$). Let $F_j = L_j \setminus K_j$. That is, $F_j$ is the set of gates in layer $j$ that are outside of the lightcones of $A$ and $B$. Suppose that $\rho = \Tr_Y(U \ketbra{0}{0}^{\otimes n} U^\dagger)$ for some set of qudits $Y$.  We have then that
\begin{align}
	\Tr(A \otimes B \rho) &= \Tr \left (  (A \otimes B) U \ketbra{0}{0}^{\otimes n} U^\dagger \right) \\
	&= \bra{0^n} U^\dagger (A \otimes B) U  \ket{0^n} \\
	&= \bra{0^n} L_{[d\ldots 1]}^\dagger (A \otimes B) L_{[d\ldots 1]}  \ket{0^n}
\end{align}

Consider the last layer $L_d = F_d \cup K_d$. Since $F_d$ does not intersect either lightcone (and therefore does not intersect $A \otimes B$), we have that 
\begin{align}
	L_d^\dagger (A \otimes B) L_d &= (F_d^\dagger \otimes K_d^\dagger) (A \otimes B) (F_d \otimes K_d) \\
								 &= K_d^\dagger (A \otimes B) K_d.
\end{align} 
Continuing in this manner, we get that
\begin{align}
	\Tr(A \otimes B \rho) &= \bra{0^n} (K(A) \otimes K(B))^\dagger (A \otimes B) (K(A) \otimes K(B)) \ket{0^n} \\
	&= \bra{0^n} (K(A)^\dagger A K(A)) \otimes (K(B)^\dagger B K(B)) \ket{0^n} \\
	&= \left (\bra{0^n} (K(A)^\dagger A K(A)) \ket{0^n} \right) \left ( \bra{0^n} (K(B)^\dagger B K(B)) \ket{0^n} \right) \\
	&= \left (\bra{0^n} (U^\dagger A U) \ket{0^n} \right) \left ( \bra{0^n} (U^\dagger B U) \ket{0^n} \right) \\
	&= \Tr(A \rho) \cdot \Tr(B \rho).
\end{align}
\end{proof}

\section{A family of asymptotically good quantum codes with efficient encoding and decoding}
\label{sec:good_code}

For completeness, we present a family of asymptotically good quantum codes with efficient encoding and decoding. In an earlier version of this paper, we claimed that the algebraic geometry codes of~\cite{ashikhmin2001asymptotically} formed such a family. While these algebraic geometry codes have asymptotically good parameters (i.e., constant rate, constant relative distance and constant sized alphabets), to our knowledge it has not been established whether these codes have efficient decoding procedures.

Instead, we present an alternate code construction. The idea is to take the \emph{quantum Reed-Solomon code}, which attains the optimal tradeoff between rate and distance\footnote{This optimal tradeoff is also known as the \emph{quantum Singleton bound}.}, but has an alphabet size that grows linearly with the blocklength:

\begin{theorem}[Quantum Reed-Solomon codes]
\label{thm:qrsc}
	There exists an infinite family of $[[N,K,D]]_N$ codes over $N$-dimensional qudits where $K = N/2$ and $D = \frac{N}{4} + 1$. Furthermore, this family admits polynomial-time encoding and decoding. 
\end{theorem}

We remark that the notation $[[N,K]]_q$ for a quantum code $\mathcal{C}$ indicates that we can encode $K$ logical qudits of dimension $q$ into $N$ physical qudits. In other words, the code $\mathcal{C}$ can encode $K \log q$ qu\emph{bits}.

 To reduce the alphabet size of the quantum Reed-Solomon code, we \emph{concatenate} the each symbol of the Reed-Solomon code with a binary inner code that also achieves constant rate and constant relative distance. Fix an $[[N,K,D]]_N$ quantum Reed-Solomon code. Since the alphabet size is $N$, we can use (for example) the $[[n,k,d]]$ binary algebraic geometry codes of~\cite{ashikhmin2001asymptotically} for the inner code, with $n = O(\log N)$, $k = \log N$, and $d = \Omega(\log N)$. 

The codewords of the concatenated code are formed by first encoding a $K\log N$-qubit message into a quantum Reed-Solomon codeword of $N$ physical qudits, and then encoding each of the physical qudits using the $[[n,k,d]]$ inner binary code, to obtain an $Nn$-qubit codeword. The resulting code is binary, encodes $Kn$ qubits  into $Nn$ qubits, and has distance $Dd$. Thus the rate is $K/N = \frac{1}{2}$ and the relative distance is $\Omega(D/N)$, which is a constant. 

We now argue that this code is also efficiently encodable and decodable. The encoding process was described in the previous paragraph; the outer encoding is efficient because of Theorem~\ref{thm:qrsc}, and the inner encoding is efficient because the algebraic geometry codes are efficiently encodable. 

To decode, one can first apply the inner decoding procedure in parallel on each block of $n$ qubits, and then applying the outer decoding procedure on the resulting $N$-qudit message (where we interpret $k$ qubits as an $N$-dimensional qudit). The outer decoding procedure is efficient because the quantum Reed-Solomon code is efficiently decodable; the inner decoding procedure can be done in $\poly(N)$ time, because we can run exhaustive search to perform minimum distance decoding for each of the blocks of $n = O(\log N)$ qubits, with respect to the inner code. This concatenated decoding procedure can correct errors of length up to $Dd$.  

Put together, this implies the following theorem:

\begin{theorem}
\label{thm:good_codes}
	There exists an infinite family of binary quantum codes with constant rate, constant relative distance, and efficient encoding and decoding procedures.
\end{theorem}

\end{document}